\title{A \cbv{} \lambdacalculus{} with lists and control}
\author{Robbert Krebbers
	\institute{Radboud University Nijmegen}
	\email{mail@robbertkrebbers.nl}
}
\begin{document}
\maketitle{}

\begin{abstract}
Calculi with control operators have been studied to reason about control in programming languages and to interpret the computational content of classical proofs.
To make these calculi into a real programming language, one should also include \datatype{}s.

As a step into that direction, this paper defines a simply typed \cbv{} \lambdacalculus{} with the control operators \(\catch\) and \(\throw\), a \datatype{} of lists, and an operator for primitive recursion (\`a la \GoedelsTfull).
We prove that our system satisfies subject reduction, progress, confluence for untyped terms, and strong normalization for well-typed terms.
\end{abstract}

\section{Introduction}
The extension of simply typed \lambdacalculus{} with control operators and the observation that these operators can be typed using rules of classical logic is originally due to Griffin~\cite{griffin1990} and has lead to a lot of research by varying the control operators, the underlying calculus or the computation rules, or by studying concrete examples of the computational content of classical proofs.
Little of this research has considered the problem of how to incorporate primitive \datatype{}s in direct style.
If one wants to use these calculi as a real functional programming language with control, this is a gap that needs filling.

This paper contributes towards the development of a \lambdacalculus{} with both \datatype{}s and control operators that allows program extraction from classical proofs.
In such a calculus one can write specifications of programs, which can be proven using (a restricted form of) classical logic. 
Program extraction would then allow to extract a program from such a proof where the classical reasoning steps are extracted to control operators.
This approach yields programs-with-control that are \emph{correct by construction} because they are extracted from a proof of the specification. 
However, in order for these extracted programs to be useful in practice, \datatype{}s in direct style should be supported.

As a step into that direction, we introduce \lambdaCatch, a simply typed \cbv{} \lambdacalculus{} with the control operators \catch{} and \throw, a list and unit \datatype{}, and an operator for primitive recursion (\`a la \GoedelsTfull).
We consider lists because those are among the most commonly used \datatype{}s in functional programming.
Expressively, lists make our system as least as strong as \GoedelsTfull{} because natural numbers can be encoded as lists over the unit type.
We prove the conventional meta theoretical properties -- subject reduction, progress, confluence, and strong normalization -- so that it may be used as a sound basis for a calculus that allows program extraction from classical proofs.

Our system is based on Herbelin's \IQCMP{}-calculus with \catch{} and \throw{} that he uses to give a computational interpretation of Markov's principle~\cite{herbelin2010}.
Most importantly, we adopt his restriction of the control operator \catch{} to \arrowFree{} types.
This restriction enables the system to satisfy \emph{progress} without losing other meta theoretical properties.
The progress property states that if \(t\) is a well-typed closed term, then \(t\) is either a value or there is a term \(t'\) such that \(t\) reduces to \(t'\). 
From a programmer's point of view this is an important property as together with confluence it ensures \emph{unique representation of data}.
For example, for the natural numbers, unique representation of data means that for each natural number there is (up to conversion) a unique closed term of the type of natural numbers.
To show how the system can be used in programming, we give a simple example in~\ref{example:program}, where we define a function that multiplies the values of a list and throws an exception as soon as it encounters the value \(0\).

Proving confluence or strong normalization for systems with control generally requires complex extensions of standard proof methods, see for example~\cite{parigot1997,py1998,baba2001,nakazawa2003,geuvers2011,rehof1994}.
For \lambdaCatch{} this is less the case.
We give relatively short proofs of subject reduction, progress, confluence for untyped terms, and strong normalization for well-typed terms.

\subsection{Related work}
Incorporating \datatype{}s into a \lambdacalculus{} with control has not received much attention. 
We briefly summarize the research done in this direction and compare it with our work. 

Parigot~\cite{parigot1992} has described a variant of his \lambdaMu{}-calculus with second-order types.
His system is very powerful, because all the well-known second-order representable \datatype{}s are included in it.
But as observed in~\cite{parigot1992,parigot1993}, it does not ensure unique representation of data.
This defect can be remedied by adding additional reduction rules, however, this results in a loss of confluence.
Another approach is to use output operators to extract data, but this introduces an additional indirection.

Rehof and S{\o}rensen have described an extension of their \(\lambda_\Delta\)-calculus with basic constants and functions~\cite{rehof1994}.
Unfortunately their extension is quite limited. 
In particular, an operator for primitive recursion, which takes terms rather than basic constants as its arguments, cannot be defined.

Barthe and Uustalu~\cite{barthe2002} have considered CPS-translations for inductive and coinductive types.
In particular, they describe a system with a primitive for iteration over the natural numbers, and the control operator \(\Delta\).
They prove preservation of typing and reduction under a CPS-translation, but do not consider other meta theoretical properties of this system.

Crolard and Polonowski~\cite{crolard2011} have considered a version of \GoedelsTfull{} with products and \callcc.
However, as their semantics is presented by CPS-translations instead of a direct specification via a calculus, their work is not directly related to ours.

Geuvers, Krebbers and McKinna~\cite{geuvers2011} have defined an extension of Parigot's \lambdaMu-calculus with a \datatype{} of natural numbers and an operator for primitive recursion.
They prove that their system satisfies subject reduction, unique representation of the naturals, confluence and strong normalization.
Also, they define a CPS-translation into \GoedelsTfull{} to show that adding control operators does not extend the expressive power.
Unfortunately, their system is call-by-name with call-by-value evaluation for \datatype{}s, making it less suitable to model control in most programming languages.
Due to their decision to use \lambdaMu{}, their proofs involve many complex extensions of standard proof techniques, and expose a lot of non-trivial interaction between control and \datatype{}s.

Several extensions of \lambdacalculus{} with the control operators \(\catch\) and \(\throw\) have been studied in the literature.
We discuss those that are most relevant to our work.
Crolard~\cite{crolard1999} has considered a \cbn{} variant of such a calculus, for which he defines a correspondence with Parigot's \lambdaMu{}-calculus.
He uses this correspondence to prove confluence, subject reduction and strong normalization, but does not consider \datatype{}s in direct style.

Herbelin~\cite{herbelin2010} has defined \IQCMP{}, a calculus with \(\catch\) and \(\throw\) to give a computational interpretation of Markov's principle.
His calculus is \cbv{} and supports product, sum, existential, and universally quantified types.
An essential feature of his calculus is the restriction of \(\catch\) to \(\forall\)-\(\arrow\)-free types.
This restriction enables him to prove progress, which is an important property for his main result, a proof of the disjunction and existence property.

Since Herbelin's \IQCMP{}-calculus has a convenient meta theory, we use it as the starting point for our work.
But instead of considering product, sum, existential, and universally quantified types, we consider a data type of lists in direct style.
Whereas Herbelin does not consider confluence, and does not give a direct proof of strong normalization, we will give direct proofs of these properties for our system.

\subsection{Outline}
In Section~\ref{section:system}, we define the typing rules, and the basic reduction rules, whose compatible closure defines computation in \lambdaCatch.
We give two example programs showing interaction between \datatype{}s and control.
Section~\ref{section:system} moreover contains proofs of subject reduction and progress.
Section~\ref{section:confluence} contains a direct proof of confluence for untyped terms based on an analysis of complete developments.
Section~\ref{section:sn} contains a direct proof of strong normalization using the reducibility method.
We close with conclusions and indications for further work in Section~\ref{section:conclusions}.

\section{The system}
\label{section:system}

\begin{definition}
\label{definition:system}
The \emph{types}, \emph{terms} and \emph{values} of \(\lambdaCatch\) are defined as
\begin{flalign*}
\sigma,\tau,\rho \inductive{}&
	\unitTypeT \separator \listTypeT \tau \separator \sigma \arrow \tau \\
t,r,s \inductive{}&
	x
	\separator \unitT 
	\separator \nilT 
	\separator (\consTsymbol)
	\separator \lrecTsymbol
	\separator \lambda x.r 
	\separator ts 
	\separator \catchin \alpha t 
	\separator \throwto \alpha t \\
v,w,v_r,v_s \inductive{}&
	x
	\separator \unitT
	\separator \nilT
	\separator (\consTsymbol)
	\separator (\consTsymbol)\;v
	\separator (\consTsymbol)\;v\;w
	\separator \lrecTsymbol
	\separator \lrecTsymbol\;v_r
	\separator \lrecTsymbol\;v_r\;v_s
	\separator \lambda x.r
\end{flalign*}
where \(x\), \(y\), and \(z\) range over \emph{variables}, and \(\alpha\), \(\beta\) and \(\gamma\) range over \emph{continuation variables}.
\end{definition}

The construct \(\lambda x.r\) binds \(x\) in \(r\), and \(\catchin \alpha t\) binds \(\alpha\) in \(t\).
The precedence of \(\lambda\) and \(\catch\) is lower than application, so instead of \(\catchin \alpha (tr)\) we write \(\catchin \alpha tr\).
We let \(\FV t\) denote the set of free variables of \(t\), and \(\FCV t\) the set of free continuation variables of \(t\). 
As usual, we use \emph{Barendregt's variable convention}~\cite{barendregt1984}. 
That is, given a term, we may assume that bound variables are distinct from free variables and that all bound variables are distinct. 
The operation of capture avoiding substitution \(\subst t x r\) of \(r\) for \(x\) in \(t\) is defined in the usual way.

The constructs \(\nilT\) and \((\consTsymbol)\) are the constructors of the list \datatype{}.
We treat these constructors, and the operator \(\lrecTsymbol\) for primitive recursion over lists, as unary constants so we can use them in partially applied position.
Also, this treatment results in a more uniform definition of the reduction rules.
We often use \Haskell{}-style notation.
In particular, we write \(\consT t r\) to denote \((\consTsymbol)\;t\;r\), and \(\lambda \_\,.\,t\) to denote \(\lambda x.t\) with \(x \notin \FV t\).
Furthermore, we write \(\listenc{t_1,\ldots,t_n}\) to denote \(\consT {t_1} {\consT {\ \ldots\ } {\consT {t_n} \nilT}}\).

Following Herbelin~\cite{herbelin2010} we restrict \catch{} to \arrowFree{} types.
Without this restriction, progress (Theorem~\ref{theorem:progress}) would fail. 
Let us consider the term \(\catchin \alpha {\lambda x.\throwto \alpha {(\lambda y.y)}}\).
Without this restriction, this term would have had type \(\unitTypeT \arrow \unitTypeT\), whereas it would not reduce to a value.
In fact, even \((\catchin \alpha {\lambda x.\throwto \alpha {(\lambda y.y)})}\ \unitT : \unitTypeT\) would not reduce.
The reduction rules for \catch{} and \throw{} are very similar to~\cite{herbelin2010}, but quite different from those by Crolard~\cite{crolard1999}.
In particular, Crolard includes reduction rules to move the \catch{} whereas Herbelin's system and ours merely allow a \throw{} to move towards the corresponding \catch{}.
This is due to the restriction to \arrowFree{} types.

\begin{definition}
We let \(\phi\) and \(\psi\) range over \arrowFree{} types.
\end{definition}

\begin{definition}
\label{definition:typing}
Let \(\Gamma\) be a map from variables to types, and let \(\Delta\) be a map from continuation variables to \arrowFree{} types.
The derivation rules for the typing judgment \(\typed \Gamma \Delta t \rho\) are as shown below.
\begin{center}

\vspace{0.1cm}

	\AXC{\strut\(x : \rho \in \Gamma\)}
	\UIC{\(\typed \Gamma \Delta x \rho\)}
	\normalAlignProof
	\DisplayProof\qquad
	\AXC{\strut}
	\UIC{\(\typed \Gamma \Delta \unitT \unitTypeT\)}
	\normalAlignProof
	\DisplayProof \qquad
	\AXC{\strut}
	\UIC{\(\typed \Gamma \Delta \nilT {\listTypeT\sigma}\)}
	\normalAlignProof
	\DisplayProof \qquad
	\AXC{\strut}
	\UIC{\(\typed \Gamma \Delta {(\consTsymbol)} {\sigma \arrow \listTypeT\sigma \arrow \listTypeT\sigma}\)}
	\normalAlignProof
	\DisplayProof

\vspace{0.5cm}
	\AXC{}
	\UIC{\(\typed \Gamma \Delta \lrecTsymbol {\rho \arrow (\sigma \arrow \listTypeT\sigma \arrow \rho \arrow \rho) \arrow \listTypeT\sigma \arrow \rho}\)}
	\normalAlignProof
	\DisplayProof
	
\vspace{0.5cm}
	\AXC{\(\typed {\Gamma, x : \sigma} \Delta t \tau\)}
	\UIC{\(\typed \Gamma \Delta {\lambda x.t} {\sigma \arrow \tau}\)}
	\normalAlignProof
	\DisplayProof\qquad
	\AXC{\(\typed \Gamma \Delta t {\sigma \arrow \tau}\)}
	\AXC{\(\typed \Gamma \Delta s \sigma\)}
	\BIC{\(\typed \Gamma \Delta {ts} \tau\)}
	\normalAlignProof
	\DisplayProof

\vspace{0.5cm}
	\AXC{\(\typed \Gamma {\Delta,\alpha:\psi} t \psi\)}
	\UIC{\(\typed \Gamma \Delta {\catchin \alpha t} \psi\)}
	\normalAlignProof
	\DisplayProof \qquad
	\AXC{\(\typed \Gamma \Delta t \psi\)}
	\AXC{\(\alpha:\psi \in \Delta\)}
	\BIC{\(\typed \Gamma \Delta {\throwto \alpha t} \tau\)}
	\normalAlignProof
	\DisplayProof
\vspace{0.4cm}
\end{center}
\end{definition}

\begin{lemma}
\label{lemma:values_typed}
Given a value \(v\) with \(\typed {} \Delta v \rho\), then:
\begin{enumerate}
\item If \(\rho = \unitTypeT\), then \(v\) is of the shape \(\unitT\).
\item If \(\rho = \listTypeT \sigma\), then \(v\) is of the shape \(\listenc{w_1, \ldots, w_n}\).
\item \label{lemma:values_typed_arrow}
	If \(\rho = \sigma\arrow\tau\), then \(v\) is of the shape \((\consTsymbol)\), \((\consTsymbol)\,w\), \(\lrecTsymbol\), \(\lrecTsymbol\,v_r\), \(\lrecTsymbol\,v_r\,v_s\) or \(\lambda x.r\).
\end{enumerate}
\end{lemma}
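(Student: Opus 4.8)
The plan is to argue by induction on the structure of the value \(v\), doing a case analysis on which clause of the value grammar produced \(v\) and, in each case, inverting the typing derivation \(\typed {} \Delta v \rho\) to read off the possible shape of \(\rho\). The three claims are mutually exclusive in \(\rho\), so for each form of \(v\) it suffices to identify \(\rho\) and verify the one matching claim; the other two are then vacuously true. One could equivalently phrase this as an induction on the typing derivation, but structural induction on the value keeps the bookkeeping lightest.

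The decisive observation is that the variable context is empty. Hence the case \(v = x\) is impossible: the only typing rule for a variable requires \(x : \rho \in \Gamma\), which cannot hold when \(\Gamma\) is empty. With that case removed, the remaining forms separate cleanly by their outermost constructor. Inverting the rule for \(\unitT\) forces \(\rho = \unitTypeT\), giving claim~1, and \(\nilT\) is forced to have type \(\listTypeT\sigma\) and equals the empty list \(\listenc{}\) (the case \(n = 0\) of claim~2). The forms \((\consTsymbol)\), \((\consTsymbol)\,w\), \(\lrecTsymbol\), \(\lrecTsymbol\,v_r\), \(\lrecTsymbol\,v_r\,v_s\) and \(\lambda x.r\) all receive an \(\arrow\)-type after inverting the constant, application, or abstraction rules: a fully or partially applied constructor or recursor still expects at least one further argument, and an abstraction has type \(\sigma \arrow \tau\) by construction. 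Each of these therefore falls under claim~3.

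The only form that needs the induction hypothesis is the fully applied cons \((\consTsymbol)\,w_1\,w_2\). Inverting the application rule twice shows that \(w_1 : \sigma\), that \(w_2\) is itself a value with \(\typed {} \Delta {w_2} {\listTypeT\sigma}\), and that the whole term has type \(\listTypeT\sigma\). Applying the induction hypothesis (claim~2) to the smaller value \(w_2\) yields \(w_2 = \listenc{u_2, \ldots, u_n}\), whence \(v = \consT{w_1}{\listenc{u_2,\ldots,u_n}} = \listenc{w_1, u_2, \ldots, u_n}\), establishing claim~2 for \(v\). This recursive unfolding of the list spine is the only genuinely non-trivial step; everything else is immediate inversion. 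I do not expect any real obstacle: no value constructor is typed by the rules for \(\catchin \alpha t\) or \(\throwto \alpha t\) (neither of which is a value), so the continuation context \(\Delta\) is merely carried along unchanged and plays no role in the argument.
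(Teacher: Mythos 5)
Your proposal is correct and follows essentially the same route as the paper: structural induction on the value \(v\), with the variable case ruled out by the empty variable context and the remaining cases handled by inversion of the typing rules (the paper simply calls these ``easy''). Your elaboration of the fully applied cons case, where the induction hypothesis unfolds the list spine, is exactly the one nontrivial step the paper leaves implicit.
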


\begin{proof}
This result is proven by induction on the structure of \(v\). 
The case \(v \equiv x\) is impossible because \(v\) is closed for free variables. 
The other cases are easy.
\end{proof}

\begin{definition}
\label{definition:context}
The \emph{contexts} of \(\lambdaCatch\) are defined as:
\[
	E \inductive \Box t \separator v \Box \separator \throwto \alpha \Box
\]
Given a context \(E\) and a term \(s\), the \emph{substitution of \(s\) for the hole in \(E\)}, notation \(\cctx E s\), is defined in the usual way.
\end{definition}

\begin{definition}
\label{definition:reduction}
Reduction \(t \red t'\) is defined as the compatible closure of:
\begin{flalign*}
	(\lambda x.t)\;v \red{}& \subst t x v \tag{\(\betav\)} \\
	\cctx E {\throwto \alpha t} \red{}& \throwto \alpha t \tag{\(\throwRule\)} \\
	\catchin \alpha {\throwto \alpha t} \red{}& \catchin \alpha t \tag{\(\catchRule{1}\)} \\
	\catchin \alpha {\throwto \beta v} \red{}& \throwto \beta v \ \textnormal{\small if \(\alpha \notin \{\beta\}\cup\FCV v\)} \tag{\(\catchRule{2}\)} \\
	\catchin \alpha v \red{}& v \hspace{1.30cm} \textnormal{\small if \(\alpha \notin\FCV v\)} \tag{\(\catchRule{3}\)} \\
	\lrecT {v_r} {v_s} \nilT \red{}& {v_r} \tag{\(\nilT\)} \\
	\lrecT {v_r} {v_s} {(\consT {v_h} {v_t})} \red{}& v_s\;v_h\;v_t\;(\lrecT {v_r} {v_s} {v_t}) \tag{\(\consTsymbol\)}
\end{flalign*}
As usual, \(\redd\) denotes the reflexive/transitive closure and \(\conv\) denotes the reflexive/symmetric/transitive closure.
\end{definition}

Notice that because we treat partially applied \((\consTsymbol)\) and \(\lrecTsymbol\) constructs as values, we get reductions like \(\consT {\throwto \alpha r} t \equiv (\consTsymbol) \;(\throwto \alpha r)\; t \red (\throwto \alpha r)\; t \red \throwto \alpha r\) for free without the need for additional contexts for \((\consTsymbol)\) and \(\lrecTsymbol\).

\begin{fact}
\label{fact:value_closed_vars}
If \(\typed \Gamma \Delta v \psi\), then \(\FCV v = \emptyset\)
\end{fact}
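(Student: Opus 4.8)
The plan is to prove this by induction on the structure of the value \(v\), performing inversion on the typing derivation in each case to see which value forms are compatible with an arrow-free type \(\psi\). The one structural fact I would isolate first is that every arrow-free type is either \(\unitTypeT\) or of the form \(\listTypeT\phi\) with \(\phi\) itself arrow-free; in other words, arrow-freeness is inherited by the element type of a list. This is what makes the induction go through.

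The easy cases split into two groups. For \(v \equiv x\), \(v \equiv \unitT\), \(v \equiv \nilT\), \(v \equiv (\consTsymbol)\) and \(v \equiv \lrecTsymbol\) there is nothing to do, since \(\FCV v = \emptyset\) by definition irrespective of the type. For \(v \equiv (\consTsymbol)\,w\), \(v \equiv \lrecTsymbol\,v_r\), \(v \equiv \lrecTsymbol\,v_r\,v_s\) and \(v \equiv \lambda x.r\), inversion on the relevant typing rule shows that \(v\) can only be assigned a type whose outermost connective is \(\arrow\), which is incompatible with the arrow-free \(\psi\); hence these cases are vacuous. It is worth stressing that here the hypothesis really is doing work: a value such as \(\lambda x.\throwto \alpha \unitT\) does have a free continuation variable, and is excluded only because its type is not arrow-free.

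This leaves the single genuine case \(v \equiv (\consTsymbol)\,v'\,w'\). Inverting the two application rules and the rule for \((\consTsymbol)\) forces \(\psi = \listTypeT\sigma\) together with \(\typed \Gamma \Delta {v'} \sigma\) and \(\typed \Gamma \Delta {w'} {\listTypeT\sigma}\). Since \(\psi = \listTypeT\sigma\) is arrow-free, both \(\sigma\) and \(\listTypeT\sigma\) are arrow-free, so the induction hypothesis applies to \(v'\) and to \(w'\) and yields \(\FCV v' = \FCV w' = \emptyset\); consequently \(\FCV v = \FCV v' \cup \FCV w' = \emptyset\). The main obstacle -- indeed the only point requiring any thought -- is precisely this step: one must use that arrow-freeness of \(\listTypeT\sigma\) descends to \(\sigma\) in order to invoke the induction hypothesis on the head \(v'\), whose type \(\sigma\) is a priori arbitrary. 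Everything else is routine inversion.
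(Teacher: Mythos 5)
Your proposal is correct and takes essentially the same route as the paper: the paper's proof is also a structural induction on \(v\) in which arrow-freeness of \(\psi\) rules out all the function-typed value forms (the partial applications and \(\lambda x.r\)), leaving only \(x\), \(\unitT\), \(\nilT\) and cons cells, where the result is immediate or follows from the induction hypothesis. You simply make explicit what the paper leaves implicit, namely the typing inversions and the key observation that arrow-freeness of \(\listTypeT\sigma\) descends to \(\sigma\), which is exactly what justifies applying the induction hypothesis in the cons case.
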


\begin{proof}
By induction on the structure of the value \(v\).
Since \(\psi\) is \(\arrow\)-free, we only have to consider the cases \(v \equiv x\), \(v \equiv \unitT\), \(v \equiv \nilT\) and \(v \equiv \consT {v_l} {v_r}\), for which the result trivially holds.
\end{proof}

The reduction rules (\(\catchRule 2\)) and (\(\catchRule 3\)) require that \(\alpha \notin \FCV v\).
This side condition can be omitted for well-typed terms by the previous fact.
However, since we consider the problem of confluence for untyped terms (Section~\ref{section:confluence}), we do need this additional restriction. 

\begin{definition}
\label{definition:naturals}
We define a type for the natural numbers \(\natTypeT \defined \listTypeT \unitTypeT\), with the following operations on it.
\begin{flalign*}
	\zeroT \defined{}& \nilT \\
	\sucTsymbol \defined{}& (\consTsymbol)\; \unitT \\
	\nrecTsymbol \defined{}& \lambda x_r x_s\,.\,\lrecTsymbol\; x_r\; (\lambda \,\_\,.\,x_s)
\end{flalign*}
We let \(\natenc n \defined \sucTsymbol^n \zeroT\) denote the representation of a natural number.
\end{definition}

\begin{fact}
The operations on \(\natTypeT\) satisfy the expected conversions.
\begin{flalign*}
	\nrecT {v_r} {v_s} \zeroT \redd{}& v_r  \\
	\nrecT {v_r} {v_s} {(\sucT v)} ={}& v_s\;v\;(\nrecT {v_r} {v_s} v)
\end{flalign*}
\end{fact}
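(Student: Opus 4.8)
The plan is to prove both conversions purely by unfolding the definitions from Definition~\ref{definition:naturals} and then applying the basic reduction rules of Definition~\ref{definition:reduction}; no induction or auxiliary lemma is needed. The only thing to keep in mind throughout is that $v_r$, $v_s$, and $v$ are values by our naming convention, and that $\nilT$, $\unitT$, and any $\lambda \,\_\,.\,v_s$ are values as well, so every $\betav$, $(\nilT)$, and $(\consTsymbol)$ step invoked below is indeed licensed.

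For the first conversion I would unfold $\nrecTsymbol$ and $\zeroT$, turning $\nrecT{v_r}{v_s}\zeroT$ into $(\lambda x_r x_s\,.\,\lrecTsymbol\;x_r\;(\lambda \,\_\,.\,x_s))\;v_r\;v_s\;\nilT$. Two $\betav$ steps (both arguments are values) produce $\lrecT{v_r}{(\lambda \,\_\,.\,v_s)}\nilT$, and a single application of the $(\nilT)$ rule yields $v_r$. This establishes $\nrecT{v_r}{v_s}\zeroT \redd v_r$ as a genuine reduction sequence, matching the statement.

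For the second conversion I would proceed analogously: unfolding $\sucTsymbol$ rewrites $\sucT v$ to $\consT\unitT v$, and after the same two $\betav$ steps I reach $\lrecT{v_r}{(\lambda \,\_\,.\,v_s)}{(\consT\unitT v)}$. Since $\lambda \,\_\,.\,v_s$, $\unitT$, and $v$ are values, the $(\consTsymbol)$ rule applies and gives $(\lambda \,\_\,.\,v_s)\;\unitT\;v\;(\lrecT{v_r}{(\lambda \,\_\,.\,v_s)}v)$; one further $\betav$ step collapses $(\lambda \,\_\,.\,v_s)\;\unitT$ to $v_s$, leaving $v_s\;v\;(\lrecT{v_r}{(\lambda \,\_\,.\,v_s)}v)$.

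The point I expect to require the most care is explaining why the second statement is phrased with conversion ($=$) rather than reduction ($\redd$): the residual subterm $\lrecT{v_r}{(\lambda \,\_\,.\,v_s)}v$ is not literally $\nrecT{v_r}{v_s}v$. Instead, $\nrecT{v_r}{v_s}v \redd \lrecT{v_r}{(\lambda \,\_\,.\,v_s)}v$ by exactly the same two $\betav$ steps used above, so the right-hand side $v_s\;v\;(\nrecT{v_r}{v_s}v)$ reduces in its third argument to the common term $v_s\;v\;(\lrecT{v_r}{(\lambda \,\_\,.\,v_s)}v)$, to which the left-hand side also reduces. Both sides thus share a common reduct, and closing under $\conv$ gives the stated equality. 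Beyond this bookkeeping there is no real obstacle: it is precisely the fact that the $\betav$ steps inside $\nrecTsymbol$ can only be performed in the reducing direction that forces $=$ in place of $\redd$.
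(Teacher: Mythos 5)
Your proof is correct, and it fills in what the paper leaves implicit: the paper states this as a \emph{fact} with no proof, treating the unfolding of Definition~\ref{definition:naturals} followed by the $\betav$, $(\nilT)$, and $(\consTsymbol)$ rules as routine. Your argument matches that intended reading exactly, and in particular you correctly identify the one genuinely subtle point --- that the second line must be a conversion rather than a reduction, since $\nrecT{v_r}{v_s}v$ reduces to $\lrecT{v_r}{(\lambda\,\_\,.\,v_s)}v$ only in the forward direction, so the two sides merely share a common reduct.
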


Colson and Fredholm~\cite{colson1998} have shown that in \GoedelsTfull{} with call-by-value reduction, it takes at least a number of steps that is linear with respect to the input for a non-trivial algorithm to reduce to a value.
In particular, it is impossible to compute the predecessor in constant time.
Intuitively it is easy to see why, consider the reduction \(\nrecT {v_r} {v_s} {(\sucT v)} \red v_s\;v\;(\nrecT {v_r} {v_s} v)\).
Due to the restriction of \(\beta\)-reduction to values, the recursive call, \(\nrecT {v_r} {v_s} v\) has to be reduced to a value before the whole term is able to reduce to a value.
In \lambdaCatch{} we can use the control mechanism to do better.

\begin{example}
\label{example:predecessor}
We define the predecessor function \(\predT : \natTypeT \arrow \natTypeT\) as follows.
\[
	\predT \defined \lambda n\,.\,\catchin \alpha {\nrecT \zeroT {(\lambda x\,.\,\throwto \alpha x)} n}
\]
Computing the predecessor is possible in a constant number of steps.
\begin{flalign*}
\predT\ {\natenc{n+1}} 
	\redd{}& \catchin \alpha {\nrecT \zeroT {(\lambda x\,.\,\throwto \alpha x)} {(\sucT{\natenc n})}} \\
	\redd{}& \catchin \alpha {(\lambda x\,.\,\throwto \alpha x)\;\natenc n\;(\lrecT \zeroT {(\lambda \,\_\,x\,.\,\throwto \alpha x)} {\natenc n})} \\
	\redd{}& \catchin \alpha {(\throwto \alpha {\natenc n})\;(\lrecT \zeroT {(\lambda \,\_\,x\,.\,\throwto \alpha x)} {\natenc n})} \\
	\redd{}& \catchin \alpha {\throwto \alpha {\natenc n}} \redd \natenc n
\end{flalign*}
\end{example}

\begin{example}
\label{example:program}
We define a \lambdaCatch-program \(F : \listTypeT\natTypeT\arrow\natTypeT\) that computes the product of the elements of a list.
The interest of this program is that it uses the control mechanism to stop multiplying once the value 0 is encountered.
\begin{flalign*}
	F \defined{}& \lambda l\, .\, \catchin \alpha {\lrecT {\natenc 1} H l} \\
	H \defined{}& \lambda x\,\_\,.\,\nrecT {(\throwto \alpha \zeroT)} {(\lambda y \,\_\, h\,.\, \sucT y * h)} x
\end{flalign*}
Here, addition \((+)\) and multiplication \((*)\) are defined as follows.
\begin{flalign*}
	(+) \defined{}& \lambda n m \,.\,\nrecT m {(\lambda \,\_\; y \,.\, {\sucT y})} n \\
	(*) \defined{}& \lambda n m \,.\,\nrecT \zeroT {(\lambda \,\_\; y \,.\, m + y)} n
\end{flalign*}
We show a computation of \(F\,\listenc{\natenc 4, \natenc 0, \natenc 9}\).
\begin{flalign*}
F\,\listenc{\natenc 4, \natenc 0, \natenc 9} 
	\redd{}& \catchin \alpha {\lrecT {\natenc 1} H {\listenc{\natenc 4, \natenc 0, \natenc 9}}} \\
	\redd{}& \catchin \alpha {\nrecT {(\throwto \alpha \zeroT)} {(\lambda y \,\_\,h\,.\, \sucT y * h)} {\natenc 4}\;(\lrecT {\natenc 1} H {\listenc{\natenc 0, \natenc 9}})} \\
	\redd{}& \catchin \alpha {(\lambda h\,.\,\natenc 4 * h)\; (\lrecT {\natenc 1} H {\listenc{\natenc 0, \natenc 9}})} \\
	\redd{}& \catchin \alpha {(\lambda h\,.\,\natenc 4 * h)\; (\throwto \alpha \zeroT)} \\
	\redd{}& \catchin \alpha {\throwto \alpha \zeroT}
	\redd{} \zeroT
\end{flalign*}
\end{example}

\begin{lemma}
\label{lemma:typed_subst}
If \(\typed \Gamma \Delta r \sigma\) and \(\typed {\Gamma, x :\sigma} \Delta t \rho\), then \(\typed \Gamma \Delta {\subst t x r} \rho\).
\end{lemma}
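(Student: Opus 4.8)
The plan is to proceed by induction on the derivation of the second hypothesis \(\typed {\Gamma, x : \sigma} \Delta t \rho\). Since the typing rules are syntax-directed, each term former admits exactly one applicable rule, so this is the same as structural induction on \(t\): in each case I inspect the last rule, invoke the induction hypotheses on the immediate subterms, and rebuild a derivation of \(\typed \Gamma \Delta {\subst t x r} \rho\).

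The leaf and non-binding composite cases are routine. For \(t \equiv y\) a variable, either \(y \equiv x\), in which case \(\rho = \sigma\) and \(\subst y x r \equiv r\), so the first hypothesis \(\typed \Gamma \Delta r \sigma\) is exactly what is required; or \(y \not\equiv x\), in which case \(y : \rho \in \Gamma\) and \(\subst y x r \equiv y\), so the variable rule applies directly over \(\Gamma\). For the constants \(\unitT\), \(\nilT\), \((\consTsymbol)\) and \(\lrecTsymbol\), substitution has no effect and their types do not depend on the variable context, so the same axiom retypes them over \(\Gamma\). For an application \(t \equiv t_1 t_2\) I apply the hypothesis to each of \(t_1\) and \(t_2\) and reassemble with the application rule, and for a throw \(t \equiv \throwto \alpha {t'}\) I apply the hypothesis to \(t'\) and reuse the side condition \(\alpha : \psi \in \Delta\), which is preserved because substitution does not alter \(\Delta\).

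The genuine work is in the two binding cases. For \(t \equiv \lambda y . t'\), Barendregt's variable convention lets me assume \(y \not\equiv x\) and \(y \notin \FV r\). The premise is \(\typed {\Gamma, x : \sigma, y : \tau_1} \Delta {t'} {\tau_2}\) with \(\rho = \tau_1 \arrow \tau_2\); since \(\Gamma\) is a map I may read the context as \((\Gamma, y : \tau_1), x : \sigma\), and the freshness \(y \notin \FV r\) lets me weaken the first hypothesis to \(\typed {\Gamma, y : \tau_1} \Delta r \sigma\). The induction hypothesis then yields \(\typed {\Gamma, y : \tau_1} \Delta {\subst {t'} x r} {\tau_2}\), and the abstraction rule gives the result. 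The case \(t \equiv \catchin \alpha {t'}\) is analogous, except that the extension is on the continuation context \(\Delta\) rather than on \(\Gamma\): from the premise \(\typed {\Gamma, x : \sigma} {\Delta, \alpha : \psi} {t'} \psi\), I weaken \(r\) to \(\typed \Gamma {\Delta, \alpha : \psi} r \sigma\) and apply the hypothesis, then close with the catch rule.

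I expect the main obstacle to be precisely these binding cases, since they force me to commute substitution past a binder and to strengthen the context of the first hypothesis. Both steps rely on standard structural lemmas, namely exchange (free, since contexts are maps) and weakening for the variable and continuation contexts, which are not stated explicitly in the excerpt; if they are not already available I would first prove weakening by a straightforward induction, and the convention on bound names guarantees that \(\subst {t'} x r\) captures neither \(y\) nor \(\alpha\).
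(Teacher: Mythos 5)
Your proof is correct: the paper states this substitution lemma without proof, regarding it as standard, and your structural induction on the typing derivation (with the variable split, weakening, and Barendregt's convention handling the \(\lambda\) and \(\catch\) binders) is precisely the canonical argument the paper implicitly relies on.
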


\begin{theorem}[Subject reduction]
\label{theorem:sr}
If \(\typed \Gamma \Delta t \rho\) and \(t \red t'\), then \(\typed \Gamma \Delta {t'} \rho\).
\end{theorem}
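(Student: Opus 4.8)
The plan is to argue by induction on the derivation of \(t \red t'\). Since \(\red\) is the compatible closure of the seven base rules, the induction splits into one case per base rule, together with the congruence cases arising from the term formers. Throughout I would use the evident \emph{generation} (inversion) lemmas for the typing judgment: because the rules are syntax-directed, the outermost shape of a term determines which rule concluded its derivation, and hence fixes the types of its immediate subterms. The real work is done by the substitution lemma (Lemma~\ref{lemma:typed_subst}) and by Fact~\ref{fact:value_closed_vars}, plus a routine strengthening lemma allowing an unused continuation variable to be dropped from \(\Delta\).

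For the base rules not involving control the arguments are routine. For \((\betav)\), inverting \(\typed \Gamma \Delta {(\lambda x.t)\;v} \rho\) yields \(\typed {\Gamma, x:\sigma} \Delta t \rho\) and \(\typed \Gamma \Delta v \sigma\), whence \(\typed \Gamma \Delta {\subst t x v} \rho\) by Lemma~\ref{lemma:typed_subst}. For the list-recursion rules \((\nilT)\) and \((\consTsymbol)\), I would invert the typing of the \(\lrecTsymbol\)-application against the typing rule for \(\lrecTsymbol\); in the \((\consTsymbol)\) case I then re-typecheck \(v_s\;v_h\;v_t\;(\lrecT {v_r} {v_s} {v_t})\) by reassembling the pieces, reusing the typing of \(\lrecTsymbol\) for the recursive call. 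These amount to matching types along the arrows of the \(\lrecTsymbol\) scheme.

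The control rules are where the argument is more delicate, and the crux is the interaction of the continuation context \(\Delta\) with the restriction of \(\catch\) and \(\throw\) to \arrowFree{} types. For \((\throwRule)\), the point is that the typing rule for \(\throwto \alpha t\) assigns an \emph{arbitrary} type; so whatever type \(\cctx E {\throwto \alpha t}\) carries, inverting through the single context layer \(E\) recovers \(\alpha:\psi \in \Delta\) and \(\typed \Gamma \Delta t \psi\), and the throw rule re-derives \(\typed \Gamma \Delta {\throwto \alpha t} \rho\) at the very same \(\rho\). Rule \((\catchRule 1)\) is direct inversion: typing \(\catchin \alpha {\throwto \alpha t}\) at \(\psi\) gives \(\typed \Gamma {\Delta,\alpha:\psi} t \psi\) (the thrown-to variable is \(\alpha\), so its declared type must be \(\psi\)), and the catch rule yields \(\typed \Gamma \Delta {\catchin \alpha t} \psi\). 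I expect the main obstacle to be \((\catchRule 2)\) and \((\catchRule 3)\): here I must \emph{remove} \(\alpha\) from \(\Delta\) when retyping the value \(v\) (respectively \(\throwto \beta v\)). Inverting the catch shows that \(v\) is typed at an \arrowFree{} type under \(\Delta,\alpha:\psi\); Fact~\ref{fact:value_closed_vars} then gives \(\FCV v = \emptyset\), so \(\alpha\) does not occur freely and the strengthening lemma lets me drop \(\alpha:\psi\), after which the throw rule (resp. the already-available typing of \(v\)) closes the case. Checking that the \arrowFree{} discipline really guarantees the absence of the escaping continuation variable is the part that needs the most care.

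Finally, the congruence cases are handled uniformly: if \(t \red t'\) because a proper subterm reduces, I invert the typing of \(t\) to type that subterm, apply the induction hypothesis to obtain a reduct of the same type, and reassemble using the same typing rule. The only case worth a remark is reduction under \(\catchin \alpha {(\cdot)}\), where the induction hypothesis is applied in the extended context \(\Delta,\alpha:\psi\); all other congruence cases leave \(\Gamma\) and \(\Delta\) unchanged.
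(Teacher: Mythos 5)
Your proposal is correct and follows essentially the same route as the paper: a case analysis on the reduction rules (plus congruence cases), with Lemma~\ref{lemma:typed_subst} discharging \((\betav)\); the paper's proof is simply a compressed statement of this, leaving the inversion, strengthening, and \((\catchRule 2)\)/\((\catchRule 3)\) details (where the side condition \(\alpha \notin \FCV v\) already gives what Fact~\ref{fact:value_closed_vars} provides) implicit.
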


\begin{proof}
We have to show that each reduction rule preserves typing. 
We use Lemma~\ref{lemma:typed_subst} for (\(\betav\)).
\end{proof}

\begin{lemma}
\label{lemma:nf_open}
Given a normal form \(t\) with \(\typed {} \Delta t \rho\), then either \(t\) is a value, or \(t \equiv \throwto \beta v\) for some value \(v\) and continuation variable \(\beta\).
\end{lemma}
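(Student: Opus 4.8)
The plan is to argue by induction on the structure of the normal form \(t\), reading off from the typing derivation which shapes can occur. Since \(t\) is closed for free variables, the case \(t \equiv x\) cannot arise, and the cases \(t \equiv \unitT\), \(t \equiv \nilT\), \(t \equiv (\consTsymbol)\), \(t \equiv \lrecTsymbol\) and \(t \equiv \lambda x.r\) are immediate, as each of these is already a value. The three remaining cases -- application, \(\catch\), and \(\throw\) -- carry the content of the argument. In each of them I first use that \(t\) being a normal form forces its immediate subterms to be normal forms (which remain closed for free variables, since none of these constructs binds a free term variable and we never descend under a \(\lambda\)), so that the induction hypothesis applies to them.

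For an application \(t \equiv t_1\,t_2\), I would first show that \(t_1\) must be a value: by the induction hypothesis it is a value or of the form \(\throwto \beta v\), but in the latter case \(t_1\,t_2 \equiv \cctx{\Box\,t_2}{\throwto \beta v}\) is a \((\throwRule)\)-redex, contradicting normality. Writing \(t_1 \equiv v_1\), the same reasoning applied to the context \(v_1\,\Box\) forces \(t_2\) to be a value \(v_2\). Now \(v_1\) has an arrow type, so Lemma~\ref{lemma:values_typed}(\ref{lemma:values_typed_arrow}) tells me \(v_1\) is one of \((\consTsymbol)\), \((\consTsymbol)\,w\), \(\lrecTsymbol\), \(\lrecTsymbol\,v_r\), \(\lrecTsymbol\,v_r\,v_s\) or \(\lambda x.r\). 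In the first five cases \(t\) is a partial application that is itself a value -- except when \(v_1 \equiv \lrecTsymbol\,v_r\,v_s\), where \(v_2\) has list type and hence, by Lemma~\ref{lemma:values_typed}, is \(\nilT\) or of the form \(\consT {v_h} {v_t}\), making \(t\) a \((\nilT)\)- or \((\consTsymbol)\)-redex. The case \(v_1 \equiv \lambda x.r\) makes \(t\) a \((\betav)\)-redex. So whenever an application is a normal form it is in fact a value.

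The \(\catch\) case is where the restriction to \arrowFree{} types does its work, and I expect this to be the delicate point. For \(t \equiv \catchin \alpha {t'}\) the body \(t'\) is typed with an \arrowFree{} type, so by the induction hypothesis it is either a value \(v\) or a term \(\throwto \gamma w\). If \(t'\) is a value, then Fact~\ref{fact:value_closed_vars} gives \(\FCV v = \emptyset\), so \(\alpha \notin \FCV v\) and \(t\) is a \((\catchRule{3})\)-redex. If \(t' \equiv \throwto \gamma w\) with \(\gamma = \alpha\), then \(t\) is a \((\catchRule{1})\)-redex; and if \(\gamma \neq \alpha\), then \(w\) again carries an \arrowFree{} type, so Fact~\ref{fact:value_closed_vars} yields \(\FCV w = \emptyset\) and thereby the side condition \(\alpha \notin \{\gamma\}\cup\FCV w\) of \((\catchRule{2})\), making \(t\) a \((\catchRule{2})\)-redex. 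Thus a \(\catch\)-term can never be a normal form, which is consistent with the statement. It is precisely here that the \arrowFree{} restriction is essential: without it \(\alpha\) could occur in \(\FCV v\) and none of the \(\catch\)-rules would fire, exactly as in the term \(\catchin \alpha {\lambda x.\throwto \alpha {(\lambda y.y)}}\) discussed earlier.

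Finally, for \(t \equiv \throwto \beta {t'}\) the body is normal and the induction hypothesis gives that \(t'\) is a value \(v\) or a throw \(\throwto \gamma w\). In the latter case \(t \equiv \cctx{\throwto \beta \Box}{\throwto \gamma w}\) is a \((\throwRule)\)-redex, contradicting normality, so \(t'\) must be a value and \(t \equiv \throwto \beta v\) -- exactly the second alternative of the conclusion. This exhausts all cases. The only genuine subtlety, beyond the bookkeeping that subterms of a normal form are again normal, is the use of Fact~\ref{fact:value_closed_vars} in the \(\catch\) case to discharge the free-continuation-variable side conditions of \((\catchRule{2})\) and \((\catchRule{3})\).
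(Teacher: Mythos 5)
Your proof is correct and follows essentially the same route as the paper: induction over the (syntax-directed) typing derivation, Lemma~\ref{lemma:values_typed} to rule out redexes in the application case, and Fact~\ref{fact:value_closed_vars} to discharge the side conditions in the \(\catch\) case. In fact you spell out a detail the paper elides with ``similarly'': that for \(\catchin \alpha {\throwto \gamma w}\) with \(\gamma \neq \alpha\), the thrown value \(w\) has \arrowFree{} type, so Fact~\ref{fact:value_closed_vars} yields the side condition of \((\catchRule{2})\).
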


\begin{proof}
This result is proven by induction on the derivation of \(\typed {} {\Delta} t \rho\).
\begin{enumerate}
\item Let \(\typed {} \Delta x \rho\) with \(x : \rho \in \emptyset\). 
	This is impossible because \(x : \rho \notin \emptyset\).
\item In the case of \(\unitT\), \(\nilT\), \((\consTsymbol)\), \(\lrecTsymbol\) and \(\lambda x.r\) the result is immediate.
\item Let \(\typed {} \Delta {ts} \tau\) with \(\typed {} \Delta t {\sigma\arrow\tau}\) and \(\typed {} \Delta s \sigma\). 
	By the \IH{} we know that the terms \(r\) and \(s\) are either a value or a \throw.
	Since \(ts\) is in normal form, it is impossible that either of them is a \throw. 
	Therefore, we may assume that both are values.
	Now, since \(t\) has type \(\sigma\arrow\tau\), we can use Lemma~\ref{lemma:values_typed} to analyze the possible shapes of \(t\).
	\begin{enumerate}
	\item Let \(t \equiv \lrecTsymbol\,v_r\,v_s\).
		By the typing rules we obtain that \(s\) has type \(\listTypeT \rho\) for some \(\rho\).
		So, by Lemma~\ref{lemma:values_typed} we have that \(s\) is a list. 
		However, \(ts\) is in normal form, so this is impossible.
	\item Let \(t \equiv \lambda x.r\). This case is impossible because \(s\) is a value and \(ts\) is in normal form.
	\item In all other cases, the term \(ts\) is a value.
	\end{enumerate}
\item Let \(\typed {} \Delta {\catchin \alpha t} \psi\) with \(\typed {} {\Delta,\alpha:\psi} t \psi\). 
	By the \IH{} we know that \(t\) is a value or a \throw.
	If \(t\) is a value, Fact~\ref{fact:value_closed_vars} gives us that \(\alpha \notin \FCV t\).
	This is impossible since \(\catchin \alpha t\) is in normal form.
	Similarly, it is also impossible that \(t\) is a \throw.
\item Let \(\typed {} \Delta {\throwto \alpha t} \sigma\) with \(\typed {} \Delta t \psi\) and \(\alpha:\psi\in\Delta\). 
	By the \IH{} we know that \(t\) is a value or a \throw.
	If \(t\) is a value, we are done.
	Furthermore, \(t\) cannot be a \throw{} since \(\throwto \alpha t\) is in normal form. 
	\qedhere
\end{enumerate}
\end{proof}

\begin{theorem}[Progress]
\label{theorem:progress}
If \(\typed {} {} t \rho\), then \(t\) is either a value, or there is a term \(t'\) with \(t \red t'\). 
\end{theorem}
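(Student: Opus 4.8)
The plan is to prove the contrapositive: every closed, well-typed \emph{normal form} is a value. Given \(t\) with \(\typed {} {} t \rho\), I would first dispose of the easy alternative. If \(t\) is \emph{not} a normal form, then by definition there is some \(t'\) with \(t \red t'\), and the second disjunct of the conclusion holds immediately. So it remains only to treat the case in which \(t\) is a normal form.

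For a normal form the real work has already been packaged into Lemma~\ref{lemma:nf_open}. The key observation is that here the typing judgment has \emph{both} contexts empty, so I can instantiate that lemma with \(\Delta = \emptyset\). It then yields that \(t\) is either a value --- in which case the first disjunct holds and we are done --- or that \(t \equiv \throwto \beta v\) for some value \(v\) and continuation variable \(\beta\).

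The crux is to rule out this residual \throw{} case. The only typing rule whose conclusion matches \(\throwto \beta v\) carries the side condition \(\beta : \psi \in \Delta\). Since \(\Delta = \emptyset\), there can be no derivation of \(\typed {} {} {\throwto \beta v} \rho\), so this alternative is vacuous. Hence \(t\) must be a value, which completes the argument.

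I do not expect a genuine obstacle here: all the difficulty --- the interplay of the \arrowFree{} restriction on \catch, the value-closedness supplied by Fact~\ref{fact:value_closed_vars}, and the shape analysis of applications via Lemma~\ref{lemma:values_typed} --- is absorbed into Lemma~\ref{lemma:nf_open}. Given that lemma, Progress is a short corollary, and the single point that does the remaining work is that closing off the continuation context eliminates the only non-value normal form an open term could otherwise have been.
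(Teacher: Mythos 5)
Your proposal is correct and follows exactly the paper's route: the paper also derives Progress immediately from Lemma~\ref{lemma:nf_open}, with the residual \(\throwto \beta v\) case ruled out because the \throw{} typing rule requires \(\beta : \psi \in \Delta\), impossible when \(\Delta = \emptyset\). You have merely spelled out the details that the paper's one-line proof leaves implicit.
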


\begin{proof}
This result follows immediately from Lemma~\ref{lemma:nf_open}.
\end{proof}

\section{Confluence}
\label{section:confluence}
To prove confluence for untyped terms of \lambdaCatch, we use the notion of \emph{parallel reduction}, as introduced by Tait and Martin-L\"of~\cite{barendregt1984}.
A parallel reduction relation \(\pred\) allows to contract a number of redexes in a term simultaneously so as to make it being preserved under substitution.
If one proves that the parallel reduction \(\pred\) satisfies:
\begin{itemize}
\item The \emph{diamond property}: if \(t_1 \pred t_2\) and \(t_1\pred t_3\), then there exists a \(t_4\) such that \(t_2 \pred t_4\) and \(t_3 \pred t_4\).
\item \(t_1 \pred t_2\) implies \(t_1 \redd t_2\) and \(t_1 \redd t_2\) implies \(t_1 \predd t_2\).
\end{itemize}
then one obtains confluence of \(\red\).

Following Takahashi~\cite{takahashi1995}, we further streamline the proof by defining the \emph{complete development} of a term \(t\), notation \(\MPRED t\), which is obtained by contracting all redexes in \(t\).
Now to prove the diamond property of \(\pred\), it suffices to prove that \(t_1 \pred t_2\) implies \(t_2 \pred \MPRED{t_1}\). 

For Parigot's \lambdaMu-calculus, it is well known that the naive parallel reduction is not preserved under substitution~\cite{baba2001}.
Instead, a complex parallel reduction that moves subterms located very deeply in a term towards the outside is needed~\cite{baba2001,nakazawa2003,geuvers2011}.
For \lambdaCatch{} we experience another issue. Consider the following rule.
\begin{center}
If \(t \pred t'\), then \(\cctx E {\throwto \alpha t} \pred \throwto \alpha t'\)
\end{center}
If we take \(\throwto {\alpha_1} {(\throwto {\alpha_2} {(\ldots {\throwto {\alpha_n} \unitT} \ldots )})}\) (with \(n \ge 5\)), then we could perform a reduction that contracts all even numbered \throw{}s, and also a reduction that contracts all odd numbered \throw{}s.
Since these two reducts do not converge in a single parallel reduction step, such a parallel reduction would not be confluent.
To repair this issue we use a similar fix as in~\cite{baba2001,nakazawa2003,geuvers2011}: we allow a \throw{} to jump over a \emph{compound context}.

\begin{definition}
\label{definition:compound_context}
\emph{Compound contexts} are defined as:
\[
	\vec E \inductive \Box \separator \vec E t \separator v \vec E \separator \throwto \alpha {\vec E}
\]
Given a compound context \(\vec E\) and a term \(s\), the \emph{substitution of \(s\) for the hole in \(\vec E\)}, notation \(\cctx {\vec E} s\), is defined in the usual way.
\end{definition}

\begin{definition}
\label{definition:pred}
\emph{Parallel reduction \(t \pred t'\)} is inductively defined as:
\begin{enumerate}
\item \(x \pred x\), \(\unitT \pred \unitT\), \(\nilT \pred \nilT\), \((\consTsymbol) \pred (\consTsymbol)\), and \(\nrecTsymbol \pred \nrecTsymbol\).
\item If \(t \pred t'\) and \(r \pred r'\), then \(tr \pred t'r'\) .
\item If \(t \pred t'\), then \(\lambda x.t \pred \lambda x.t'\).
\item If \(t \pred t'\), then \(\catchin \alpha t \pred \catchin \alpha {t'}\).
\item If \(t \pred t'\) and \(v \pred r\), then \((\lambda x.t)\,v \pred \subst {t'} x r\).
\item If \(t \pred t'\), then \(\cctx {\vec E} {\throwto \alpha t} \pred \throwto \alpha t'\).
\item If \(t \pred t'\), then  \(\catchin \alpha {\throwto \alpha t} \pred \catchin \alpha t'\).
\item If \(v \pred t\) and \(\alpha \notin \{\beta\} \cup \FCV v\), then \(\catchin \alpha {\throwto \beta v} \pred \throwto \beta t\).
\item If \(v \pred t\) and \(\alpha\notin\FV v\), then \(\catchin \alpha v \pred t\).
\item If \(v_r \pred r\), then \(\lrecT {v_r} {v_s} {\nilT} \pred r\).
\item If \(v_r \pred r\), \(v_s \pred s\), \(v_h \pred h\) and \(v_t \pred t\), then \(\lrecT {v_r} {v_s} {(\consT {v_h} {v_t})} \pred s\;h\;t\;(\lrecT r s t)\).
\end{enumerate}
\end{definition}

\begin{lemma}
\label{lemma:pred_help}
Parallel reduction satisfies the following properties.
\begin{enumerate}
\item It is reflexive, i.e. \(t \pred t\).
\item The term \(\subst v x w\) is a value.
\item If \(v \pred t\), then \(t\) is a value.
\item If \(t \pred t'\), then \(\FV {t'} \subseteq \FV t\) and \(\FCV {t'} \subseteq \FCV t\).
\item If \(t \pred t'\) and \(v \pred r\), then \(\subst t x v \pred \subst {t'} x r\).
\end{enumerate}
\end{lemma}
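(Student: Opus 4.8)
The plan is to establish the five statements in the order given, since the substitution lemma, Property~5, relies on Properties~1 and~2. Property~1 (reflexivity) is a straightforward structural induction on \(t\): the variable and constant base cases are covered by clause~1 of Definition~\ref{definition:pred}, the cases \(tr\), \(\lambda x.t\) and \(\catchin \alpha t\) by clauses~2--4 applied to the induction hypotheses, and for \(\throwto \alpha t\) I instantiate clause~6 with the empty compound context \(\vec E \equiv \Box\), which turns \(t \pred t\) into \(\throwto \alpha t \pred \throwto \alpha t\). Property~2 is an equally routine induction on the structure of the value \(v\): substitution leaves the constants unchanged, sends \(x\) to \(w\) (or fixes any other variable), and commutes with the value constructors \((\consTsymbol)\,v'\), \(\lrecTsymbol\,v_r\,v_s\), \(\lambda y.r\), etc., so the result is again a value.

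For Property~3 I argue by induction on the derivation of \(v \pred t\), exploiting that \(v\) is a value. The congruence clauses~1--3 return a value --- for clause~2 both immediate subterms of an application-shaped value are themselves values, so the induction hypothesis applies and the contractum keeps the same value shape --- whereas clauses~4 and~7--9 cannot fire since their left-hand sides are \(\catch\)-terms, which are never values. It remains to rule out the redex clauses. The \(\beta\)-clause~5 is excluded because an application-shaped value never has a \(\lambda\) in head position, and clauses~10--11 are excluded because a fully applied \(\lrecTsymbol\,v_r\,v_s\,l\) carries three arguments and hence is not a value. To exclude the throw-jump clause~6 I prove the auxiliary fact that \emph{no value has the form \(\cctx {\vec E} {\throwto \alpha s}\)}, by induction on \(\vec E\): for \(\vec E \equiv \Box\) and \(\vec E \equiv \throwto \beta {\vec{E'}}\) the term is a \(\throw\), which is never a value, while for \(\vec E \equiv \vec{E'} t\) and \(\vec E \equiv v' \vec{E'}\) the offending subterm occupies the value head or value argument of an application-shaped value, contradicting the induction hypothesis.

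Property~4 is an induction on the derivation of \(t \pred t'\); the two inclusions \(\FV {t'} \subseteq \FV t\) and \(\FCV {t'} \subseteq \FCV t\) are handled in parallel. Every clause only discards, copies, or rearranges subterms inside the contractum, so the inclusions follow from the induction hypotheses together with the elementary facts \(\FV {\subst {t'} x r} \subseteq (\FV {t'} \setminus \{x\}) \cup \FV r\) (for the \(\beta\)-clause) and \(\FV s \subseteq \FV {\cctx {\vec E} s}\), i.e.\ deleting a context can only remove variables (for the throw-jump clause). The only delicate clause is the \(\catch\)-elimination (clause~9): there \(\catchin \alpha v \pred t\) with \(v \pred t\) and \(\alpha \notin \FCV v\); the induction hypothesis gives \(\FCV t \subseteq \FCV v\), hence \(\alpha \notin \FCV t\), so \(\FCV t \subseteq \FCV v = \FCV {\catchin \alpha v}\) as required.

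The substitution lemma, Property~5, is the crux; I prove it by induction on the derivation of \(t \pred t'\), using Properties~1 and~2. Base and congruence cases are direct, appealing to Property~1 for variables and constants. For the \(\beta\)-clause \((\lambda y.t)\,w \pred \subst {t'} y s\) the induction hypotheses give \(\subst t x v \pred \subst {t'} x r\) and \(\subst w x v \pred \subst s x r\), the latter a parallel reduction between values by Property~2; refiring the \(\beta\)-clause and reconciling with the substitution-composition identity \(\subst {\subst {t'} y s} x r \equiv \subst {\subst {t'} x r} y {\subst s x r}\) --- valid since the variable convention keeps \(y\) out of \(r\) and distinct from \(x\) --- yields the claim. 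For the throw-jump clause I use that substitution distributes over a compound context, \(\subst {\cctx {\vec E} {\throwto \alpha t}} x v \equiv \cctx {\subst {\vec E} x v} {\throwto \alpha {\subst t x v}}\), where \(\subst {\vec E} x v\) is again a compound context precisely because Property~2 keeps the value in a \(v' \vec{E'}\) slot a value, so clause~6 reapplies. Finally, in the \(\catch\)-clauses~8 and~9 the side conditions must survive substitution: the relevant value becomes \(\subst {v'} x v\), whose free continuation variables lie in \(\FCV {v'} \cup \FCV v\), and the variable convention keeps the bound \(\alpha\) out of \(\FCV v\), so \(\alpha\) stays fresh and the clause reapplies. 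I expect this property to be the main obstacle --- specifically the substitution-composition bookkeeping in the \(\beta\)-clause and the preservation of the compound-context decomposition in the throw-jump clause --- while the other four properties are essentially routine inductions.
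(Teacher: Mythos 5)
The paper states this lemma without proof, treating all five properties as routine, so there is no proof of the author's to compare against; judged on its own, your proposal is correct and is exactly the kind of standard induction the omission presupposes. Moreover, you correctly isolate the places where the inductions are not completely mechanical: reflexivity at \(\throwto \alpha t\) must be obtained from clause 6 with \(\vec E \equiv \Box\), since there is no congruence clause for \throw{}; excluding clause 6 in Property 3 needs the auxiliary fact that no value has the shape \(\cctx {\vec E} {\throwto \alpha s}\); and in Property 5 the value-restricted clauses can only be re-fired after substitution because Property 2 keeps substituted values values, with their freshness side conditions surviving by Barendregt's convention. Two minor points of presentation rather than substance: in Property 3, clause 2, the induction hypothesis alone only yields that the reducts of the two subterms are values, and your claim that the application ``keeps the same value shape'' additionally needs the short nested observation that \((\consTsymbol)\)- and \(\lrecTsymbol\)-headed values can only parallel-reduce to terms with the same head and the same number of arguments (e.g.\ \((\consTsymbol)\,v \pred t'\) forces \(t' \equiv (\consTsymbol)\,v'\)); and in Property 5 the recursor clauses 10--11 also rely on Property 2 to re-apply, which you leave implicit.
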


\begin{lemma}
\label{lemma:red_pred}
Parallel reduction enjoys the intended behavior. That is:
\begin{enumerate}
\item If \(t \red t'\), then \(t \pred t'\).
\item If \(t \pred t'\), then \(t \redd t'\).
\end{enumerate}
\end{lemma}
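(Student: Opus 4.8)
The plan is to prove both directions by induction on the appropriate derivation, with only one case requiring genuine work. For part~(1), I would induct on the derivation of \(t \red t'\) as a compatible closure. The seven base cases are the reduction rules, and each is a special instance of the matching clause of parallel reduction obtained by letting the parallel reductions of the immediate subterms be reflexive (Lemma~\ref{lemma:pred_help}, item~1): rule \((\betav)\) is clause~(5), rule \((\throwRule)\) is clause~(6) with the ordinary context \(E\) read as a compound context of depth one, rules \((\catchRule{1})\), \((\catchRule{2})\), \((\catchRule{3})\) are clauses~(7), (8), (9), and rules \((\nilT)\) and \((\consTsymbol)\) are clauses~(10) and~(11). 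For the congruence steps of the compatible closure, I would use the congruence clauses~(2), (3), (4), and clause~(6) with \(\vec E = \Box\) for the congruence under \throw, supplying reflexive parallel reductions for the subterms that are left untouched.

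For part~(2), I would induct on the derivation of \(t \pred t'\), establishing \(t \redd t'\) clause by clause. The atomic clause~(1) holds by reflexivity of \(\redd\). The congruence clauses~(2), (3), (4) follow from the induction hypotheses together with the fact that \(\redd\), as the reflexive-transitive closure of a compatible relation, is itself closed under contexts, so subterms may be rewritten independently. For clause~(5) I would first rewrite \((\lambda x.t)\,v \redd (\lambda x.t')\,r\) using the hypotheses \(t \redd t'\) and \(v \redd r\), and then fire one \((\betav)\)-step; this step is legitimate because \(r\) is a value by Lemma~\ref{lemma:pred_help}, item~3. Clauses~(7)--(9) each start with a single \((\catchRule{1})\)-, \((\catchRule{2})\)-, or \((\catchRule{3})\)-step and then rewrite the remaining subterm by the induction hypothesis, the side conditions of clauses~(8) and~(9) being exactly what licenses \((\catchRule{2})\) and \((\catchRule{3})\). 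Clauses~(10) and~(11) similarly fire the corresponding \(\lrecTsymbol\)-reduction first and then rewrite the subterms, where the value restrictions on \(v_r, v_s, v_h, v_t\) built into the syntax of these clauses ensure the redexes are genuine.

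The one case needing real work is clause~(6), \(\cctx{\vec E}{\throwto \alpha t} \pred \throwto \alpha t'\). For this I would first prove an auxiliary lemma: for every compound context \(\vec E\) and every term \(s\), \(\cctx{\vec E}{\throwto \alpha s} \redd \throwto \alpha s\). The proof is by induction on \(\vec E\); the base case \(\vec E = \Box\) is immediate, and in each of the cases \(\vec E = \vec E'\,r\), \(\vec E = v\,\vec E'\), and \(\vec E = \throwto \beta{\vec E'}\) one applies the induction hypothesis to reduce the inner context to \(\throwto \alpha s\), exposing a redex \((\throwto \alpha s)\,r\), \(v\,(\throwto \alpha s)\), or \(\throwto \beta{(\throwto \alpha s)}\) that a single \((\throwRule)\)-step with the ordinary context \(\Box r\), \(v\,\Box\), or \(\throwto \beta\Box\) collapses. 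Granting this lemma, clause~(6) follows by reducing \(\cctx{\vec E}{\throwto \alpha t} \redd \throwto \alpha t\) and then rewriting \(t \redd t'\) under the \throw. I expect this compound-context lemma to be the crux, as it is precisely the mechanism introduced to repair the failure of naive parallel reduction discussed before Definition~\ref{definition:compound_context}; everything else amounts to matching clauses and verifying that side conditions and value restrictions align.
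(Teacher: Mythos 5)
Your proof is correct, and its skeleton---induction on the derivation of \(t \red t'\) for part~(1), with reflexivity of \(\pred\) supplying the untouched subterms, and induction on the derivation of \(t \pred t'\) for part~(2)---is the same as the paper's. The one place you genuinely depart from the paper is the \(\betav\)-case of part~(2). The paper discharges clause~(5) by firing the \(\betav\)-step \emph{first}, \((\lambda x.t)\,v \red \subst t x v\), and then appealing to ``an obvious substitution lemma for \(\redd\)'' (if \(t \redd t'\) and \(v \redd r\), then \(\subst t x v \redd \subst {t'} x r\)); you instead rewrite the subterms first, \((\lambda x.t)\,v \redd (\lambda x.t')\,r\), and fire \(\betav\) \emph{last}, which is legitimate only because \(r\) is again a value---Lemma~\ref{lemma:pred_help}(3), which you correctly cite. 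Your reordering thus trades the substitution lemma for \(\redd\) (a standard, reusable fact, but one more lemma about \(\redd\) to prove) against a value-preservation property already established in the paper; both routes are sound, and yours has the small virtue of needing no new auxiliary result about \(\redd\) beyond the context collapse. That collapse lemma, \(\cctx{\vec E}{\throwto \alpha s} \redd \throwto \alpha s\) by induction on \(\vec E\), is indeed required for clause~(6) under either approach and is left entirely implicit in the paper's two-sentence proof; making it explicit is the right call, and your case analysis over the three compound-context constructors, each ending in a single \((\throwRule)\)-step with an ordinary context, is exactly how it goes.
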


\begin{proof}
The first property is proven by induction on the derivation of \(t \red t'\) using that parallel reduction is reflexive and satisfies the substitution property (Lemma~\ref{lemma:pred_help}). 
The second property is proven by induction on the derivation of \(t \pred t'\) using an obvious substitution lemma for \(\redd\).
\end{proof}

\begin{definition}
\label{definition:complete_development}
The \emph{complete development} \(\MPRED t\) is defined as:
\begin{flalign*}
\MPRED {((\lambda x.t)\,v)} \defined{}& \subst {\MPRED t} x {\MPRED v} \\
\MPRED {(\cctx {\vec E} {\throwto \alpha t})} \defined{}& \throwto \alpha {\MPRED t}
	\qquad \textnormal{if \(t \not\equiv \throwto \gamma s\)} \\
\MPRED {(\catchin \alpha {\throwto \alpha t})} \defined{}& \catchin \alpha {\MPRED t} \\
\MPRED {(\catchin \alpha {\throwto \beta v})} \defined{}& \throwto \beta {\MPRED v}
	\qquad \textnormal{if \(\alpha\notin\{\beta\} \cup \FCV v\)} \\
\MPRED {(\catchin \alpha v)} \defined{}& \MPRED v
	\hspace{2.22cm} \textnormal{if \(\alpha\notin\FCV v\)} \\
\MPRED {(\lrecT {v_r} {v_s} \nilT)} \defined{}& \MPRED{v_r} \\
\MPRED {(\lrecT {v_r} {v_s} (\consT {v_h} {v_t}))} \defined{}& \MPRED{v_s}\;\MPRED {v_h}\;\MPRED {v_t}\;(\lrecT {\MPRED{v_r}} {\MPRED{v_s}} {\MPRED {v_t}})
\end{flalign*}
For variables, \(\unitT\), \(\nilT\), \((\consTsymbol)\) and \(\nrecTsymbol\), the complete development is defined as the identity, and it propagates through the other cases that we have omitted.
\end{definition}

We lift the parallel reduction \(\pred\) to compound contexts with the intended behavior that if \(\vec E \pred \vec F\) and \(q \pred q'\), then \(\cctx {\vec E} {\throwto \alpha q} \pred \cctx {\vec F} {\throwto \alpha {q'}}\).

\begin{definition}
\emph{Parallel reduction \(\vec E \predctx \vec F\)} on compound contexts is inductively defined as:
\begin{enumerate}
\item \(\Box \pred \Box\)
\item \(\throwto \alpha \Box \pred \Box\)
\item If \(\vec E \predctx \vec F\) and \(t \pred t'\), then \(\vec E t \predctx \vec F t'\).
\item If \(\vec E \predctx \vec F\) and \(v \pred t\), then \(v \vec E \predctx t \vec F\).
\item If \(\vec E \predctx \vec F\), then \(\throwto \alpha {\vec E} \predctx \throwto \alpha {\vec F}\).
\item If \(\vec E \predctx \vec F\), then \(\throwto \beta {(\throwto \alpha {\vec E})} \predctx \throwto \alpha {\vec F}\).
\end{enumerate}
\end{definition}

Remark that if we have that \(\cctx {\vec E} {\throwto \alpha q} \pred r\), then \(r\) is not necessarily of the shape \(\cctx {\vec F} {\throwto \alpha {q'}}\) with \(\vec E \pred \vec F\) and \(q \pred q'\) because \(q\) could be a \(\throw\).

\begin{lemma}
\label{lemma:mpred_throw_help}
If \(\cctx {\vec E} {\throwto \alpha {q_1}} \pred r\) and \(q_1 \not\equiv \throwto \gamma s\), 
then there exists a \(q_2\) and \(\vec F\) such that \(r \equiv \cctx {\vec F} {\throwto \alpha {q_2}}\) with \(\vec E \predctx \vec F\) and \(q_1 \pred q_2\).
\end{lemma}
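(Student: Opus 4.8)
The plan is to induct on the structure of the compound context \(\vec E\) from Definition~\ref{definition:compound_context}, inspecting in each case which clause of Definition~\ref{definition:pred} could have produced \(\cctx{\vec E}{\throwto \alpha{q_1}} \pred r\). The outermost constructor of \(\vec E\) fixes the head symbol of \(\cctx{\vec E}{\throwto \alpha{q_1}}\), so only a handful of clauses are ever applicable, and for each applicable clause I read off the required witnesses \(\vec F\) and \(q_2\). The hypothesis \(q_1 \not\equiv \throwto \gamma s\) is exactly what forbids the degenerate situation flagged in the remark preceding the lemma, in which the plugged-in throw is swallowed by a jump coming from further out.

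First I would treat the base case \(\vec E \equiv \Box\), where the term is \(\throwto \alpha{q_1}\). The only clause of parallel reduction whose conclusion is a throw is the jump clause \(\cctx{\vec E_0}{\throwto \delta t} \pred \throwto \delta{t'}\), so I must enumerate the ways of viewing \(\throwto \alpha{q_1}\) as a compound redex \(\cctx{\vec E_0}{\throwto \delta t}\). Taking \(\vec E_0 \equiv \Box\) gives \(\delta = \alpha\), \(t = q_1\) and \(r \equiv \throwto \alpha{q_2}\) with \(q_1 \pred q_2\), for which \(\vec F \equiv \Box\) works since \(\Box \predctx \Box\). The content of this case is to use \(q_1 \not\equiv \throwto \gamma s\) to rule out every other splitting, so that no reduct with a different head can arise.

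For the inductive step I peel off the outermost constructor of \(\vec E\). When \(\vec E \equiv \vec E'\,t\) or \(\vec E \equiv v\,\vec E'\) the term is an application, and \(r\) arises either from the application congruence — where the induction hypothesis applied to the reduct of \(\cctx{\vec E'}{\throwto \alpha{q_1}}\) hands me \(\vec F'\) and \(q_2\), and I close the case with \(\vec F \equiv \vec F'\,t'\) (respectively \(\vec F \equiv v'\,\vec F'\)) using the matching congruence clause for \(\predctx\) — or from the jump clause firing on the whole application as a single compound redex. The case \(\vec E \equiv \throwto \beta{\vec E'}\) is analogous, now invoking the clauses \(\throwto \alpha{\vec E} \predctx \throwto \alpha{\vec F}\) and \(\throwto \beta{(\throwto \alpha{\vec E})} \predctx \throwto \alpha{\vec F}\) of the reduction on compound contexts and splitting on whether \(\beta = \alpha\); throughout, Lemma~\ref{lemma:pred_help} supplies that values reduce to values and that the side conditions on free continuation variables are preserved.

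The step I expect to be the real obstacle is the jump clause fired as a single compound redex, both in the base case and in the inductive cases: the contracted throw may escape past part of \(\vec E\) (or past the outermost throw sitting inside \(q_1\)'s context), so I must show the continuation variable cannot change and synthesise an \(\vec F\) with \(\vec E \predctx \vec F\) that records this escape, leaving the reduct in the form \(\cctx{\vec F}{\throwto \alpha{q_2}}\). This is exactly the point where \(q_1 \not\equiv \throwto \gamma s\) has to do its work and where the two clauses of \(\predctx\) that drop an outer throw are needed; matching every compound-redex splitting with the right clause of \(\predctx\) is the delicate bookkeeping, whereas the congruence cases reduce routinely to the induction hypothesis.
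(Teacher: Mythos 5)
Your overall plan (induction on \(\vec E\), analysing which clause of Definition~\ref{definition:pred} produced the step) is the natural one --- the paper itself states this lemma \emph{without} proof --- but both of the claims you make at exactly the points you single out as critical are false, and the cases you defer to ``delicate bookkeeping'' cannot be closed. First, in the base case \(\vec E \equiv \Box\) you claim that \(q_1 \not\equiv \throwto \gamma s\) rules out every splitting of \(\throwto \alpha {q_1}\) other than the trivial one. It does not: \(q_1\) can fail to be a throw yet still carry a throw on its spine. Take \(q_1 \equiv (\throwto \beta y)\,z\) with \(\beta \neq \alpha\). Then \(\throwto \alpha {q_1} \equiv \cctx{\throwto \alpha {(\Box\,z)}}{\throwto \beta y}\), so clause 6 of Definition~\ref{definition:pred} gives \(\throwto \alpha {q_1} \pred \throwto \beta y\); here the continuation variable \emph{has} changed, and \(\throwto \beta y\) admits no decomposition \(\cctx{\vec F}{\throwto \alpha {q_2}}\) whatsoever. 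So under the stated hypothesis the claim is simply false; the hypothesis one actually needs is that \(q_1\) is not of the form \(\cctx{\vec E'}{\throwto \gamma s}\), i.e.\ that \(q_1\) has no throw anywhere on its spine.

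Second, even under that stronger hypothesis, the context \(\vec F\) you intend to ``synthesise'' in the jump cases need not exist, because \(\predctx\) as defined can only discard \emph{throw} frames (its clauses 2 and 6), never an application frame. The minimal instance already fails: \((\throwto \alpha y)\,z \pred \throwto \alpha y\) is a plain (\(\throwRule\)) step, hence a parallel step by Lemma~\ref{lemma:red_pred}. Here \(\vec E \equiv \Box\,z\) and \(q_1 \equiv y\), and the conclusion forces \(q_2 \equiv y\) and \(\vec F \equiv \Box\), hence requires \(\Box\,z \predctx \Box\); but the only \(\predctx\)-clauses whose right-hand side is \(\Box\) are \(\Box \predctx \Box\) and \(\throwto \alpha \Box \predctx \Box\), so this is underivable. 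Partial jumps fail the same way: \(\throwto \delta {((\throwto \alpha y)\,z)} \pred \throwto \delta {(\throwto \alpha y)}\) would need \(\throwto \delta {(\Box\,z)} \predctx \throwto \delta \Box\). Consequently no case analysis can complete your sketch as the statement stands: the lemma must first be repaired, either by enriching \(\predctx\) with clauses that discard arbitrary frames lying above the hole or above a throw frame, or by weakening the conclusion to assert only that \(r \equiv \cctx{\vec F}{\throwto \alpha {q_2}}\) for \emph{some} compound context \(\vec F\) with \(q_1 \pred q_2\) --- which is all that the uses of this lemma inside the proof of Lemma~\ref{lemma:mpred} actually require, since there one immediately applies clause 6 again to jump over \(\vec F\).
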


\begin{lemma}
\label{lemma:mpred}
If \(t_1 \pred t_2\), then \(t_2 \pred \MPRED{t_1}\).
\end{lemma}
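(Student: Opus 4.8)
The plan is to prove the statement by induction on the structure of \(t_1\) (one could equivalently induct on the derivation of \(t_1 \pred t_2\)), following Takahashi's triangle technique: for each shape of \(t_1\) I compute \(\MPRED{t_1}\) according to Definition~\ref{definition:complete_development} and then verify, for \emph{every} \(t_2\) with \(t_1 \pred t_2\), that \(t_2 \pred \MPRED{t_1}\). Throughout I lean on Lemma~\ref{lemma:pred_help}: reflexivity of \(\pred\), the fact that \(\pred\) sends values to values, and above all the substitution property ``\(t \pred t'\) and \(v \pred r\) imply \(\subst t x v \pred \subst {t'} x r\)''. The cases split into three kinds, and I would organise them by treating the throw-redex shape first: a term of the form \(\cctx{\vec E}{\throwto\alpha q}\) with \(q\not\equiv\throwto\gamma s\); then the ordinary redex contractions; and finally the base and congruence cases.

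The base cases (\(x\), \(\unitT\), \(\nilT\), \((\consTsymbol)\), \(\nrecTsymbol\)) are immediate since \(\MPRED{t_1}=t_1=t_2\), and the plain congruence cases (application, \(\lambda\), \(\catchin\alpha{-}\) when the body is not in redex position) follow from the induction hypotheses and the matching congruence rule of Definition~\ref{definition:pred}. For the genuine redexes \((\lambda x.t)\,v\), \(\lrecT {v_r}{v_s}\nilT\), \(\lrecT {v_r}{v_s}{(\consT{v_h}{v_t})}\) and the three \(\catch\)-variants, I match the corresponding clause of \(\MPRED\) and discharge the goal with the induction hypothesis plus the substitution property (for \(\betav\)) or plain congruence (for the rest). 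The point needing care is the \emph{overlap} between a redex rule and a congruence rule: for instance \((\lambda x.t)\,v\) can also be reduced by the application rule, producing \(t_2 \equiv (\lambda x.t'')\,r'\) rather than a substitution. There I re-create the contraction in the conclusion: since a value reduces only to a value and an abstraction only to an abstraction (Lemma~\ref{lemma:pred_help}), the parallel \(\betav\)-rule applies to \(t_2\), and via the induction hypotheses \(t''\pred\MPRED t\) and \(r'\pred\MPRED v\) it gives \(t_2 \pred \subst{\MPRED t}{x}{\MPRED v}=\MPRED{((\lambda x.t)\,v)}\). The same ``contract-in-the-conclusion'' manoeuvre handles the overlaps for \(\lrecTsymbol\) and for \(\catchin\alpha{\throwto\alpha t}\), \(\catchin\alpha{\throwto\beta v}\) and \(\catchin\alpha v\); in the \(\catch\) cases the side conditions transfer because free continuation variables can only shrink under \(\pred\) (Lemma~\ref{lemma:pred_help}).

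The main obstacle is the throw-redex case, and it is precisely here that the compound-context machinery earns its keep. Suppose \(t_1 \equiv \cctx{\vec E}{\throwto\alpha q}\) with \(q\not\equiv\throwto\gamma s\), so that \(\MPRED{t_1}=\throwto\alpha{\MPRED q}\). The difficulty is that \(t_1\pred t_2\) may have been derived in many ways — by the throw-jumping rule with various decompositions of \(t_1\), or by a congruence rule — and \(t_2\) need not present the \(\throw\) in the same position. Lemma~\ref{lemma:mpred_throw_help} is exactly what tames this: from \(t_1 \pred t_2\) and \(q\not\equiv\throwto\gamma s\) it yields \(t_2 \equiv \cctx{\vec F}{\throwto\alpha{q'}}\) with \(\vec E \predctx \vec F\) and \(q \pred q'\). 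Since \(q\) is a proper subterm of \(t_1\), the induction hypothesis gives \(q' \pred \MPRED q\), and then the throw-jumping rule of Definition~\ref{definition:pred} (which lets a \(\throw\) leap over the whole compound context \(\vec F\)) delivers \(t_2 \equiv \cctx{\vec F}{\throwto\alpha{q'}} \pred \throwto\alpha{\MPRED q}=\MPRED{t_1}\). The restriction \(q\not\equiv\throwto\gamma s\), shared by the definition of \(\MPRED\) and by Lemma~\ref{lemma:mpred_throw_help}, is what makes the decomposition canonical, so that a nested chain of \(\throw\)s collapses to its innermost \(\throw\) with a non-throw body; and the value constraint in the grammar of compound contexts (the clause \(v\,\vec E\)) rules out a spurious alternative decomposition. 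I expect this case — together with the auxiliary Lemma~\ref{lemma:mpred_throw_help} and the bookkeeping of \(\predctx\) — to carry essentially all the weight of the argument, the remaining cases being routine.
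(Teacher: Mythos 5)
Your proposal is correct and takes essentially the same approach as the paper: the paper likewise proves the triangle property by induction on \(t_1 \pred t_2\) (organizing the cases by the last rule of the derivation rather than by the shape of \(t_1\)), and it uses exactly the same ingredients --- the substitution, value-preservation and variable-shrinking properties of Lemma~\ref{lemma:pred_help}, Lemma~\ref{lemma:mpred_throw_help} to canonicalize terms of the form \(\cctx {\vec E} {\throwto \alpha q}\), and re-contraction of the redex in the conclusion to handle the redex/congruence overlaps. The dual organization of cases (by the shape of \(t_1\) versus by the last derivation rule) is an immaterial difference.
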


\begin{proof}
By induction on the derivation of \(t_1 \pred t_2\).
We consider some interesting cases.
\begin{enumerate}
\item Let \(t_1\,r_1 \pred t_2\,r_2\) with \(t_1 \pred t_2\) and \(r_1 \pred r_2\).
	We distinguish the following cases:
	\begin{enumerate}
	\item Let \(t_1 \equiv \lambda x.s_1\) and \(r_1\) a value.
		By distinguishing reductions we have \(t_2 \equiv \lambda x.s_2\) with \(s_1 \pred s_2\).
		Now, \(t_2 \pred \MPRED{t_1}\) and \(s_2 \pred \MPRED{s_1}\) by the \IH{}.
		Furthermore, we have that \(r_2\) is a value by Lemma~\ref{lemma:pred_help}. 
		Therefore, \(t_2\,r_2 \equiv (\lambda x.s_2)\,r_2 \pred \subst {\MPRED {s_1}} x {\MPRED {r_1}} \equiv \MPRED{(t_1\,r_1)}\) by Lemma~\ref{lemma:pred_help}.
	\item Let \(t_1 \equiv \nrecTsymbol\;v_r\;v_s\) and \(r_1 \equiv \nilT\).
		By distinguishing reductions we have \(t_2 \equiv \nrecTsymbol\;r\;s\) and \(r_2 \equiv \nilT\) with \(v_r \pred r\) and \(v_s \pred s\).
		Now, \(r \pred \MPRED{v_r}\) by the \IH{}.
		Therefore, \(t_2\,r_2 \equiv \nrecT r s \nilT \pred \MPRED{v_r} \equiv \MPRED{(\nrecT {v_r} {v_s} \nilT)} \equiv \MPRED{(t_1\,r_1)}\).
	\item Let \(t_1 \equiv \nrecTsymbol\;v_r\;v_s\) and \(r_1 \equiv \consT {v_h} {v_t}\).
		This case is similar to the previous one.
	\item \label{case:mpred_app_throw}
		Let \(t_1 \equiv \cctx {\vec E} {\throwto \beta {q_1}}\) with \(q_1 \not\equiv \throwto \gamma s\).
		By Lemma~\ref{lemma:mpred_throw_help}, we have \(t_2 \equiv \cctx {\vec F} {\throwto \alpha {q_2}}\) with \(\vec E \predctx \vec F\) and \(q_1 \pred q_2\).
		Now we have \(q_2 \pred \MPRED{q_1}\) by the \IH{}.
		Therefore, \(t_2\,r_2 \equiv \cctx {\vec F} {\throwto \alpha {q_2}}\,r_1 \pred \throwto \alpha {\MPRED{q_1}} \equiv \MPRED{(t_1\,r_1)}\).
	\item Let \(r_1 \equiv \cctx {\vec E} {\throwto \beta {q_1}}\) with \(q_1 \not\equiv \throwto \gamma s\) and \(t_1\) a value.
		This proof of this case is similar to the previous one.
	\item For the remaining cases we have \(t_2 \pred \MPRED{t_1}\) and \(r_2 \pred \MPRED{r_1}\) by the \IH{}.
		Therefore, \(t_2\,r_2 \pred \MPRED{t_1}\,\MPRED{r_1} \equiv \MPRED{(t_1\,r_1)}\).
	\end{enumerate}
\item Let \(\catchin \alpha {t_1} \pred \catchin \alpha {t_2}\) with \(t_1 \pred t_2\).
	We distinguish the following cases:
	\begin{enumerate}
	\item Let \(t_1 \equiv \throwto \alpha {q_1}\) with \(q_1 \not\equiv \throwto \gamma s\).
		By distinguishing reductions we obtain that \mbox{\(t_2 \equiv \throwto \alpha {q_2}\)} with \(q_1 \pred q_2\).
		Now we have \(q_2 \pred \MPRED{q_1}\) by the \IH{}.
		Therefore, \(\catchin \alpha {t_2} \equiv \catchin \alpha {\throwto \alpha {q_2}} \pred \catchin \alpha {\MPRED{q_1}} \equiv \MPRED{(\catchin \alpha {t_1})}\).
	\item Let \(t_1 \equiv \throwto \alpha {(\cctx {\vec E} {\throwto \beta {q_1}})}\) with \(q_1 \not\equiv \throwto \gamma s\).
		We have \(t_2 \equiv \cctx {\vec F} {\throwto \beta {q_2}}\) with \(\throwto \alpha {\vec E} \predctx \vec F\) and \(q_1 \pred q_2\) by Lemma~\ref{lemma:mpred_throw_help}.
		Also, \(q_2 \pred \MPRED{q_1}\) by the \IH{}.
		Therefore, \(\catchin \alpha {t_1} \equiv \catchin \alpha {\cctx {\vec F} {\throwto \beta {q_2}}}
			\pred \catchin \alpha {\MPRED{q_1}} \equiv \MPRED{(\catchin \alpha {t_1})}\).
	\item Let \(t_1 \equiv \throwto \beta {v_1}\) with \(\alpha \notin \{\beta\} \cup \FV {v_1}\).
		By distinguishing reductions we obtain that \mbox{\(t_2 \equiv \throwto \beta {v_2}\)} with \(v_1 \pred v_2\).
		Now, \(v_2 \pred \MPRED{v_1}\) by the \IH{}, and \(\alpha \notin \FCV {v_2}\) by Lemma~\ref{lemma:pred_help}.
		So, \(\catchin \alpha {t_2} \equiv \catchin \alpha {\throwto \beta {v_2}}
			\pred \throwto \beta {\MPRED {v_1}} \equiv \MPRED{(\catchin \alpha {t_1})}\).
	\item Let \(t_1\) be a value with \(\alpha \notin \FCV {t_1}\).
		We have \(t_2 \pred \MPRED{t_1}\) by the \IH{}.
		Also, \(t_2\) is a value and \(\alpha \notin \FCV {t_2}\) by Lemma~\ref{lemma:pred_help}.
		Therefore, \(\catchin \alpha {t_2} \pred \MPRED{t_1} \equiv \MPRED{(\catchin \alpha {t_1})}\).
	\item For the remaining cases we have \(t_2 \pred \MPRED{t_1}\) by the \IH{}.
		As a result we have \(\catchin \alpha {t_2} \pred \catchin \alpha {\MPRED{t_1}} \equiv \MPRED{(\catchin \alpha {t_1})}\).
	\end{enumerate}
\item Let \(\cctx {\vec E} {\throwto \alpha {t_1}} \pred \throwto \alpha {t_2}\) with \(t_1 \pred t_2\).
	We distinguish the following cases:
	\begin{enumerate}
	\item Let \(t_1 \equiv \cctx {\vec E} {\throwto \beta {q_1}}\) with \(q_1 \not\equiv \throwto \gamma s\).
		This case is similar to \ref{case:mpred_app_throw}.
	\item For the remaining cases we have \(t_2 \pred \MPRED{t_1}\) by the \IH{}.
		As a result we have \(\throwto \alpha {t_2} \pred \throwto \alpha {\MPRED{t_1}} \equiv \MPRED{(\cctx {\vec E} {\throwto \alpha {t_1}})}\).
	\end{enumerate}
\item Let \(\catchin \alpha {\throwto \alpha {t_1}} \pred \catchin \alpha {t_2}\) with \(t_1 \pred t_2\).
	We have \(t_2 \pred \MPRED{t_1}\) by the \IH{}.
	As a result we have \(\catchin \alpha {t_2} \pred \catchin \alpha {\MPRED{t_1}} \equiv \MPRED{(\catchin \alpha {\throwto \alpha {t_1}})}\).
\item Let \(\catchin \alpha {\throwto \beta {v_1}} \pred \throwto \beta {t_2}\) with \(v_1 \pred t_2\), \(\alpha \notin \{\beta\} \cup \FV {v_1}\).
	We have \(t_2 \pred \MPRED{v_1}\) by the \IH{}.
	Furthermore, \(t_2\) is a value by Lemma~\ref{lemma:pred_help}.
	As a result we have \(\throwto \beta {t_2} \pred \throwto \beta {\MPRED{v_1}} \equiv \MPRED{(\catchin \alpha {\throwto \beta {v_1}})}\).
\item Let \(\catchin \alpha {v_1} \pred t_2\) with \(v_1 \pred t_2\) and \(\alpha\notin\FV {v_1}\).
	We have \(t_2 \pred \MPRED{v_1}\) by the \IH{} and \(t_2\) is a value by Lemma~\ref{lemma:pred_help}.
	Therefore, \(t_2 \pred \MPRED{v_1} \equiv \MPRED{(\catchin \alpha {v_1})}\).
	\qedhere
\end{enumerate}
\end{proof}

\begin{corollary}
\label{corollary:confluence}
If \(t_1 \pred t_2\) and \(t_1 \pred t_3\), then there exists a \(t_4\) such that \(t_2 \pred t_4\) and \(t_3 \pred t_4\).
\end{corollary}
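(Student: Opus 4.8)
The plan is to exploit the triangle property established in Lemma~\ref{lemma:mpred}, which is exactly the strengthening that Takahashi's method is designed to deliver. The crucial observation is that the complete development \(\MPRED{t_1}\) depends only on \(t_1\) and not on the particular parallel reduction step taken out of \(t_1\). Thus it furnishes a \emph{canonical} common reduct that works uniformly for \emph{every} parallel reduct of \(t_1\).

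Concretely, I would take \(t_4 \defined \MPRED{t_1}\). Applying Lemma~\ref{lemma:mpred} to the hypothesis \(t_1 \pred t_2\) immediately yields \(t_2 \pred \MPRED{t_1}\), that is, \(t_2 \pred t_4\). Applying the same lemma to the hypothesis \(t_1 \pred t_3\) yields \(t_3 \pred \MPRED{t_1}\), that is, \(t_3 \pred t_4\). This closes the diamond, and no case analysis on the structure of \(t_1\), \(t_2\), or \(t_3\) is required at this stage, since all of that work has already been absorbed into the proof of Lemma~\ref{lemma:mpred}.

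In this sense there is no genuine obstacle remaining for the corollary itself: the entire difficulty of confluence has been front-loaded into Lemma~\ref{lemma:mpred}, and in particular into the awkward interaction between \throw{} and compound contexts handled via Lemma~\ref{lemma:mpred_throw_help}. The corollary is then a one-line consequence. For completeness I would remark that the diamond property of \(\pred\) established here, combined with Lemma~\ref{lemma:red_pred} (which sandwiches \(\pred\) between \(\red\) and \(\redd\)), gives confluence of \(\red\) by the standard diagram-pasting argument: the diamond property of \(\pred\) lifts to the confluence of \(\predd\), and since \(\predd\) and \(\redd\) coincide by Lemma~\ref{lemma:red_pred}, confluence of \(\red\) follows.
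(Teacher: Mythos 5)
Your proposal is correct and is exactly the paper's own proof: take \(t_4 \defined \MPRED{t_1}\) and apply Lemma~\ref{lemma:mpred} to each of the two hypotheses. The closing remark about lifting the diamond property to confluence of \(\red\) via Lemma~\ref{lemma:red_pred} is also precisely how the paper proves Theorem~\ref{theorem:confluence}.
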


\begin{proof}
Take \(t_4 \defined \MPRED{t_1}\). Now we have \(t_2 \Rightarrow \MPRED{t_1}\) and \(t_3 \Rightarrow \MPRED{t_1}\) by Lemma~\ref{lemma:mpred}.
\end{proof}

\begin{theorem}[Confluence]
\label{theorem:confluence}
If \(t_1 \redd t_2\) and \(t_1 \redd t_3\), then there exists a \(t_4\) such that \(t_2 \redd t_4\) and \(t_3 \redd t_4\).
\end{theorem}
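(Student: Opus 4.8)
The plan is to run the standard Tait--Martin-L\"of argument, exploiting the fact that all the genuine difficulty has already been absorbed into the diamond property for parallel reduction. By Corollary~\ref{corollary:confluence}, the relation \(\pred\) enjoys the one-step diamond property. I would first lift this to its reflexive/transitive closure \(\predd\), showing that \(\predd\) is itself confluent, and then transfer this confluence to \(\redd\) using the sandwiching \(\red \subseteq \pred \subseteq \redd\) supplied by Lemma~\ref{lemma:red_pred}.

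For the first step I would prove the \emph{strip lemma}: if \(t_1 \pred t_2\) and \(t_1 \predd t_3\), then there is a \(t_4\) with \(t_2 \predd t_4\) and \(t_3 \pred t_4\). This follows by induction on the number of \(\pred\)-steps in \(t_1 \predd t_3\), using the single-step diamond property of Corollary~\ref{corollary:confluence} at each stage to tile the diagram (with reflexivity of \(\pred\), from Lemma~\ref{lemma:pred_help}, handling the base case). A second, outer induction on the length of \(t_1 \predd t_2\), now applying the strip lemma repeatedly, yields the full diamond property for \(\predd\): if \(t_1 \predd t_2\) and \(t_1 \predd t_3\), then \(t_2\) and \(t_3\) have a common \(\predd\)-reduct. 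In other words, \(\predd\) is confluent.

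It remains to identify \(\predd\) with \(\redd\). From Lemma~\ref{lemma:red_pred} we have that \(t \red t'\) implies \(t \pred t'\), and \(t \pred t'\) implies \(t \redd t'\); taking reflexive/transitive closures of both inclusions gives \(\redd \subseteq \predd\) and \(\predd \subseteq \redd\), so the two relations coincide. Confluence of \(\redd\) is then literally the confluence of \(\predd\) just established: given \(t_1 \redd t_2\) and \(t_1 \redd t_3\), I re-read these as \(\predd\)-reductions, obtain the common reduct \(t_4\) with \(t_2 \predd t_4\) and \(t_3 \predd t_4\), and re-read these back as \(\redd\)-reductions.

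Since the diamond property for \(\pred\) is already in hand, I do not expect a genuine obstacle; the only work is routine diagram chasing. The one place demanding a little care is the direction of the tiling in the strip lemma -- keeping track of which edges are single \(\pred\)-steps and which are \(\predd\)-sequences -- but this is entirely standard and independent of the control operators, which were the source of all the real difficulty in Lemma~\ref{lemma:mpred}.
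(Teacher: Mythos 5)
Your proposal is correct and follows essentially the same route as the paper: the paper's proof is exactly the standard Tait--Martin-L\"of diagram chase from Corollary~\ref{corollary:confluence} (cited as ``a simple diagram chase as in~\cite{barendregt1984}'') followed by the transfer to \(\redd\) via Lemma~\ref{lemma:red_pred}. You have merely spelled out the strip-lemma tiling and the identification \(\predd = \redd\) that the paper leaves implicit.
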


\begin{proof}
By Corollary~\ref{corollary:confluence} and a simple diagram chase (as in~\cite{barendregt1984}), we obtain confluence of \(\pred\).
Now, confluence of \(\red\) is immediate by Lemma~\ref{lemma:red_pred}.
\end{proof}

\section{Strong normalization}
\label{section:sn}
In this section we prove that reduction in \lambdaCatch{} is strongly normalizing.
We use the reducibility method, which is originally due to Tait~\cite{tait1967}.
By this method, instead of proving that a term \(t\) of type \(\rho\) is strongly normalizing, one proves \(t \in \SNint\rho\), where \(\SNint{\sigma\arrow\tau} \defined \{ t \separator \forall s \in \SNint\sigma\ .\ ts \in \SNint\tau \}\).

Although Tait's method does work for the \cbn{} \(\lambdaMu\)-calculus~\cite{parigot1997}, David and Nour~\cite{david2005} have shown that it does not extend to its symmetric variant. 
They proved that the property, if \(r \in \SN\) and \(\subst t x r \in \SNint\sigma\), then \((\lambda x.t)\,r \in \SNint\sigma\), no longer holds due to the reduction \mbox{\(t\,(\mu\alpha.c) \red \mu\alpha.\subst c \alpha {\alpha (t\Box)}\)}.
However, the similar reduction \(t\,(\throwto \alpha r) \red \throwto \alpha r\) in our calculus consumes \(t\) without performing any (structural) substitution in \(r\). 
So, for \lambdaCatch{} this problem does not exist.

It may be possible to prove strong normalization by use of a strictly reduction preserving translation into another system that is already known to be strongly normalizing.
For example, one may try to use the obvious translation into the second-order \cbv{} \lambdaMu{}-calculus where the data type of lists can be defined as 
\(\listTypeT \tau \defined \forall X\,.\; X \arrow (\tau \arrow X \arrow X) \arrow X\).
However, this translation does not preserve the reduction \((\consTsymbol)\).
We are unaware of other systems that are both known to be strongly normalizing, and allow a straightforward strictly reduction preserving translation.

\begin{definition}
\label{definition:sn}
The set of \emph{strongly normalizing terms}, \(\SN\), contains the terms \(t\) for which the length of each reduction sequence starting at \(t\) is bounded.
We use the notation \(\SNbound t\) to denote this bound.
\end{definition}

Due to the addition of lists to \lambdaCatch{}, the interpretation becomes a bit more involved than for the case of \lambdaArrow{}. 
Intuitively, we want our interpretation to ensure that each element of the list \(t \in \SNint{\listTypeT\sigma}\) is contained in \(\SNint\sigma\).

\begin{definition}
Given a set of terms \(S\), the set of terms \(\SNlistint S\) is inductively defined by the following rule.
\[
\AXC{\(\forall v\ w\ .\ \text{if } t \redd \consT v w \text{ then } v \in S \text{ and } w \in \SNlistint S\)}
\UIC{\(t \in \SNlistint S\)}
\normalAlignProof
\DisplayProof
\]
\end{definition}

Notice that the above definition ensures that \(\nilT \in \SNlistint S\) because \(\nilT\) cannot reduce to \(\consT v w\).

\begin{definition}
The interpretation \(\SNint\rho\) of a type \(\rho\) is defined as:
\begin{flalign*}
\SNint{\unitTypeT} &\defined {}
	\SN \\
\SNint{\listTypeT\sigma} &\defined {} 
	\SN \cap \SNlistint{\SNint\sigma} \\
\SNint{\sigma\arrow\tau} &\defined{}
	\{ t \separator \forall s \in \SNint\sigma\ .\ ts \in \SNint\tau \}
\end{flalign*}
\end{definition}

Lemma~\ref{lemma:sn_reducibility_arrow_free} and~\ref{lemma:sn_reducibility} establish an important property: \(\SNint\psi = \SN\) for \arrowFree{} types~\(\psi\).
Since the \(\catch\) operator is restricted to \arrowFree{} types, this means that \(\catchin \alpha r \in \SN\) implies \(\catchin \alpha r \in \SNint\psi\). This property is the key result to prove that \(r \in \SNint\psi\) implies \(\catchin \alpha r \in \SNint\psi\) (Lemma~\ref{lemma:sn_catch}).

The property \(r \in \SNint\sigma\) implies \(\catchin \alpha r \in \SNint\sigma\) does not hold for all types~\(\sigma\).
For example, consider \(t \equiv (\catchin \alpha {\throwto \alpha \omega})\,\omega\) with \(\omega = \lambda x.xx\).
By Corollary~\ref{corollary:sn_throw} we have \(\throwto \alpha \omega \in \SNint{\unitTypeT\arrow\unitTypeT}\) and using the above result we would have had \(t \in \SN\).
This is impossible because \(t \redd \omega\omega \red \omega\omega \red \ldots\)

\begin{definition}
We define the \emph{size} of \(t\), notation \(\sizeof t\), as the number of symbols in \(t\).
For \(t \in \SN\), we define \(\sizeofnf t\) as the size of the normal form of \(t\).
\end{definition}

\begin{lemma}
\label{lemma:sn_reducibility_arrow_free}
If \(\psi\) is \arrowFree{}, then \(\SN \subseteq \SNint\psi\).
\end{lemma}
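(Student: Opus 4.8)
The plan is to induct on the structure of the $\arrow$-free type $\psi$. Since $\psi$ contains no arrow, it is either $\unitTypeT$ or of the form $\listTypeT\phi$ with $\phi$ again $\arrow$-free. In the base case $\psi \equiv \unitTypeT$ we have $\SNint{\unitTypeT} = \SN$ by definition, so $\SN \subseteq \SNint{\unitTypeT}$ is immediate. In the step case $\psi \equiv \listTypeT\phi$ the induction hypothesis gives $\SN \subseteq \SNint\phi$. As $\SNint{\listTypeT\phi} = \SN \cap \SNlistint{\SNint\phi}$, the remaining goal is the inclusion $\SN \subseteq \SNlistint{\SNint\phi}$.

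To establish this inclusion, fix $t \in \SN$; by the defining rule of $\SNlistint{}$ it suffices to show that for all $v$ and $w$ with $t \redd \consT v w$ we have $v \in \SNint\phi$ and $w \in \SNlistint{\SNint\phi}$. So assume $t \redd \consT v w$. As $t \in \SN$, its reduct $\consT v w$ lies in $\SN$, and hence so do its subterms $v$ and $w$. Thus $v \in \SN$, and the outer induction hypothesis $\SN \subseteq \SNint\phi$ gives $v \in \SNint\phi$. The second requirement, $w \in \SNlistint{\SNint\phi}$, is precisely an instance of the inclusion $\SN \subseteq \SNlistint{\SNint\phi}$ we are proving, and this is where the real argument lies.

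I would discharge it by a nested induction on the lexicographically ordered pair $(\SNbound t, \sizeof t)$, which is well founded on $\SN$; concretely I show that $(\SNbound w, \sizeof w)$ is strictly below $(\SNbound t, \sizeof t)$ in every case. If $t \equiv \consT v w$ (the reduction takes zero steps), then $w$ is a proper subterm of $t$, so $\SNbound w \le \SNbound t$ while $\sizeof w < \sizeof t$, a strict drop in the second component. If instead $t \red t' \redd \consT v w$ with at least one step, then $\SNbound{t'} < \SNbound t$, and since $w$ is a subterm of the reduct $\consT v w$ of $t'$ we obtain $\SNbound w \le \SNbound{t'} < \SNbound t$, a strict drop in the first component. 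Either way the nested induction hypothesis applies to $w$ and yields $w \in \SNlistint{\SNint\phi}$.

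The step I expect to be the main obstacle is choosing this measure for the nested induction: it must decrease on passing from $t$ to the tail $w$ of a reduct $\consT v w$, yet the reduction bound alone does not decrease when $t$ is itself of the form $\consT v w$ and the tail is extracted without any reduction. Pairing the reduction bound with the term size, and splitting the zero-step case (handled by size) from the positive-step case (handled by the bound), resolves this. The argument relies only on the standard facts that subterms and reducts of an $\SN$-term are again in $\SN$, that taking a subterm does not increase $\SNbound{\cdot}$, and that a reduction step strictly decreases it.
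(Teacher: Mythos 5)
Your proof is correct, and it reaches the statement by a genuinely different induction than the paper's. The paper proves the whole lemma by a single well-founded induction on \(\sizeofnf t\), the size of the normal form of \(t\), with only a case distinction (not a structural induction) on \(\psi\): when \(t \redd \consT v w\), both \(\sizeofnf v < \sizeofnf t\) and \(\sizeofnf w < \sizeofnf t\), so one induction hypothesis, quantified over all \arrowFree{} types, disposes of the head (at the element type) and of the tail (at the unchanged list type) in one stroke. You instead use structural induction on \(\psi\) for the head and handle the tail, whose type does not shrink, by a nested lexicographic induction on \((\SNbound t, \sizeof t)\). The trade-off: the paper's measure gives a shorter argument, but \(\sizeofnf\) is well defined only because \(\SN\)-terms have unique normal forms, so the paper's proof quietly leans on the confluence result of Section~\ref{section:confluence} (and on the observation that the normal form of \(t\) is the cons of the normal forms of \(v\) and \(w\)); your measure avoids normal forms altogether and needs only the elementary facts you list --- reducts and subterms of \(\SN\)-terms are in \(\SN\), a step \(s \red s'\) gives \(\SNbound{s'} < \SNbound s\), and a subterm \(s'\) of \(s\) satisfies \(\SNbound{s'} \le \SNbound s\) (this last point is sound because reduction is a compatible closure, so reductions inside a subterm lift to the whole term). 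Your separation of the zero-step case, where the size component does the work, from the positive-step case, where the bound strictly drops, is exactly what makes the lexicographic pair decrease on the recursive call to the tail, and both decrease claims check out.
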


\begin{proof}
We have to show that for each \(t \in \SN\), we have \(t \in \SNint\psi\).
We proceed by well-founded induction on \(\sizeofnf t\) and a case distinction on the structure of \(\psi\).
The only interesting case is (list), where we have to show that \(t \in \SNlistint{\SNint\psi}\).
So, let \(t \redd \consT v w\) for values \(v\) and \(w\). 
We have \(v \in \SN \subseteq \SNint\psi\) and \(w \in \SNint{\listTypeT\psi}\) by the \IH{} as \(\sizeofnf v < \sizeofnf t\) and \(\sizeofnf w < \sizeofnf t\).
Hence, \(t \in \SNlistint{\SNint\psi}\) as required.
\end{proof}

\begin{lemma}
\label{lemma:sn_reducibility_redd}
If \(t \in \SNint\sigma\) and \(t \redd t'\), then \(t' \in \SNint\sigma\).
\end{lemma}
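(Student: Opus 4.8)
The plan is to prove this by induction on the structure of the type \(\sigma\), relying on the fact that \(\redd\) is the transitive closure of a \emph{compatible} reduction relation and that each clause of the interpretation is designed to propagate closure under reduction. Before the induction I would record the elementary fact that \(\SN\) itself is closed under \(\redd\): if \(t \in \SN\) and \(t \redd t'\), then every reduction sequence starting at \(t'\) is a suffix of one starting at \(t\), hence bounded, so \(t' \in \SN\). This fact is used in the \(\unitTypeT\) and \(\listTypeT\) cases.

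For the base case \(\sigma = \unitTypeT\) we have \(\SNint{\unitTypeT} = \SN\), so the claim is exactly the fact just recorded. For \(\sigma = \listTypeT\tau\), recall \(\SNint{\listTypeT\tau} = \SN \cap \SNlistint{\SNint\tau}\); closure of the \(\SN\) component is again the elementary fact, so the remaining work is to show \(t' \in \SNlistint{\SNint\tau}\). Here I expect not to need the induction hypothesis at all. To apply the defining rule of \(\SNlistint{\cdot}\) to \(t'\), I must check that every \(\consT v w\) with \(t' \redd \consT v w\) satisfies \(v \in \SNint\tau\) and \(w \in \SNlistint{\SNint\tau}\); but any such \(\consT v w\) is also a reduct of \(t\), since \(t \redd t' \redd \consT v w\), and the assumption \(t \in \SNlistint{\SNint\tau}\) delivers precisely these two memberships. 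In other words, \(\SNlistint S\) is closed under \(\redd\) for any \(S\) essentially by definition, because a reduct of \(t\) can only reach fewer \(\consT{}{}\)-forms than \(t\).

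The inductive case \(\sigma = \sigma_1 \arrow \sigma_2\) is the only place the induction hypothesis enters. Assuming \(t \in \SNint{\sigma_1\arrow\sigma_2}\) and \(t \redd t'\), I would take an arbitrary \(s \in \SNint{\sigma_1}\) and show \(t's \in \SNint{\sigma_2}\). From \(t \in \SNint{\sigma_1\arrow\sigma_2}\) I obtain \(ts \in \SNint{\sigma_2}\), and since \(\red\) is a compatible closure, \(t \redd t'\) gives \(ts \redd t's\). The induction hypothesis at the structurally smaller type \(\sigma_2\) then yields \(t's \in \SNint{\sigma_2}\), as required.

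I do not anticipate a genuine obstacle, as this is a routine \emph{closure of the candidates under reduction} argument. The only point that demands care is the list clause: one must read the inductive definition of \(\SNlistint{\cdot}\) correctly and observe that its closure condition is automatically inherited by reducts, so that the membership \(v \in \SNint\tau\) is passed through unchanged rather than re-derived.
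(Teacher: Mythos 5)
Your proposal is correct and follows essentially the same argument as the paper's proof: structural induction on \(\sigma\), with the unit case by closure of \(\SN\), the list case by composing \(t \redd t' \redd \consT v w\) and applying the defining condition of \(\SNlistint{\SNint\sigma}\) to \(t\) (no induction hypothesis needed), and the arrow case by compatibility (\(ts \redd t's\)) plus the induction hypothesis at the codomain type. Your additional observation that the list clause inherits closure under \(\redd\) essentially by definition is exactly what the paper's proof exploits, just stated more explicitly.
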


\begin{proof}
We prove this result by structural induction on \(\sigma\).
\begin{enumerate}
\item[(unit)] Let \(t \in \SNint\unitTypeT = \SN\) and \(t \redd t'\).
	By definition of \(\SN\) we have \(t' \in \SN\).
\item[(list)] Let \(t \in \SNint{\listTypeT\sigma} = \SN \cap \SNlistint{\SNint\sigma}\) and \(t \redd t'\).
	As we have \(t' \in \SN\) by definition of \(\SN\), it remains to prove that \(t' \in \SNlistint{\SNint\sigma}\).
	So, let \(t' \redd \consT v w\) for values \(v\) and \(w\). 
	Now we have \(t \redd t' \redd \consT v w\).
	Therefore, \(v \in \SNint\sigma\) and \(w \in \SNlistint{\SNint\sigma}\) by the assumption that \(t \in \SNlistint{\SNint\sigma}\).
\item[(\(\arrow\))] Let \(t \in \SNint{\sigma\arrow\tau}\) and \(t \redd t'\).
	Since we have to prove that \(t' \in \SNint{\sigma\arrow\tau}\), let \(r \in \SNint\sigma\).
	By assumption we have \(tr\in\SNint\tau\).
	Furthermore we have \(tr \redd t'r\) because \(t \redd t'\).
	Therefore, \(t'r \in \SNint\tau\) by the \IH{}.
	\qedhere
\end{enumerate}
\end{proof}

\begin{definition}
We let \(\vec t\) and \(\vec u\) denote a sequence of terms.
The set \(\SNvec\) contains all sequences of strongly normalizing terms.
\end{definition}

\begin{lemma}
\label{lemma:sn_reducibility}
We have the following results:
\begin{enumerate}
\item \(\SNint\sigma \subseteq \SN\).
\item If \(\vec u \in \SNvec\) then \(x\vec u \in \SNint \sigma\).
\end{enumerate}
\end{lemma}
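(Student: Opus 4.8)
The plan is to prove both statements \emph{simultaneously} by structural induction on the type \(\sigma\). The two properties are genuinely interdependent in the arrow case: to witness that a function is strongly normalizing we must apply it to a reducible argument, and the cheapest supply of reducible arguments is a fresh variable, which is precisely what the second property provides. Neither property can be peeled off and proved first. The induction is nonetheless well founded, because in the arrow case each invocation of the induction hypothesis is at one of the strictly smaller component types.

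For the first property the base cases are immediate: \(\SNint{\unitTypeT} = \SN\) and \(\SNint{\listTypeT\tau} = \SN \cap \SNlistint{\SNint\tau}\) are both contained in \(\SN\) by definition. For \(\sigma = \tau_1 \arrow \tau_2\), take \(t \in \SNint{\tau_1\arrow\tau_2}\) and a fresh variable \(y\). By the second property at the smaller type \(\tau_1\), applied to the empty sequence, we have \(y \in \SNint{\tau_1}\); hence \(ty \in \SNint{\tau_2}\) by the definition of the arrow interpretation, and \(ty \in \SN\) by the first property at \(\tau_2\). Since every reduction of \(t\) induces a reduction of \(ty\), strong normalization of \(ty\) forces \(t \in \SN\).

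For the second property I first need the auxiliary fact that \(x\vec u \in \SN\) whenever \(\vec u \in \SNvec\). Granting this, the case \(\unitTypeT\) is immediate. For \(\listTypeT\tau\) I additionally observe that \(x\vec u\) can never reduce to a term of the form \(\consT v w\): under reductions inside the arguments the head stays the variable \(x\), and under the spine reductions that may fire the head becomes a \(\throw\), but it never becomes the constant \((\consTsymbol)\); hence the membership condition defining \(\SNlistint{\SNint\tau}\) is vacuously satisfied, and \(x\vec u \in \SN \cap \SNlistint{\SNint\tau} = \SNint{\listTypeT\tau}\). For \(\sigma = \tau_1\arrow\tau_2\), given \(s \in \SNint{\tau_1}\), the first property at \(\tau_1\) gives \(s \in \SN\), so the extended sequence \(\vec u, s\) lies in \(\SNvec\); by the second property at \(\tau_2\) we get \(x\vec u\,s \in \SNint{\tau_2}\), which is exactly what the arrow interpretation requires.

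The main obstacle is the auxiliary fact \(x\vec u \in \SN\). A naive argument claiming that reductions of \(x\vec u\) can occur only inside the arguments fails, because a spine reduction can fire: if the first argument is a \(\throw\), then \(x\,(\throwto \alpha t) \red \throwto \alpha t\) by rule \((\throwRule)\), since \(x\) is a value and \(x\,\Box\) is a context. I expect to repair this by generalizing the statement to heads that are either a variable or a \(\throw\), and proving it by well-founded induction on the multiset of reduction bounds of the arguments together with the length of the application spine. Reductions inside an argument strictly decrease the bounds, while a spine reduction strictly shortens the spine and collapses the term toward the already strongly normalizing \(\throw\) that triggered it, so the induction goes through.
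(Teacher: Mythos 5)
Your proposal is correct and follows essentially the same route as the paper: both prove the two properties simultaneously by structural induction on \(\sigma\), with the same mutual use of the two induction hypotheses in the arrow case and the same vacuity argument (no reduct of \(x\vec u\) has the form \(\consT v w\)) in the list case. The only difference is that the paper dismisses \(x\vec u \in \SN\) as immediate, whereas you rightly observe that spine reductions triggered by \(\throw\)-arguments make this non-vacuous and sketch a sound well-founded induction to handle it --- an argument that essentially reappears in the paper's Lemma~\ref{lemma:sn_throw}.
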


\begin{proof}
The results are proven simultaneously by structural induction on \(\sigma\).
\begin{enumerate}
\item[(unit)] Both results are immediate. 
\item[(list)] Property (1). \(\SNint{\listTypeT\sigma} = \SN \cap \SNlistint{\SNint\sigma} \subseteq \SN\).

	Property (2). Let \(\vec u \in \SNvec\).
	We have to show that \(x \vec u \in \SNint{\listTypeT\sigma} = \SN \cap \SNlistint{\SNint\sigma}\).
	Since it is immediate that \(x \vec u \in \SN\), it remains to show that \(x \vec u \in \SNlistint{\SNint\sigma}\).
	However, as reductions \(x\vec u \redd \consT v w\) are impossible, we are done.
\item[(\(\arrow\))] Property (1). Let \(t \in \SNint{\sigma\arrow\tau}\).
	We have \(x \in \SNint\sigma\) by the \IH{} of property (2), and therefore \(tx \in \SNint\tau\).
	By the \IH{} of property (1) we have \(\SNint\tau \subseteq\SN\), so \(t \in \SN\).
	
	Property (2). Let \(\vec u \in \SNvec\).
	We have to show that \(x\vec u \in \SNint{\sigma\arrow\tau}\), so let \(r \in \SNint\sigma\).
	By the \IH{} of property (1) we have \(r \in \SN\), and therefore \(x\vec u r \in \SNint\tau\) by the \IH{} of property (2).
	Therefore, \(x\vec u \in \SNint{\sigma\arrow\tau}\) as required.
	\qedhere
\end{enumerate}
\end{proof}

\begin{lemma}
\label{lemma:sn_throw}
If \(r \in \SN\) and \(\vec u \in \SNvec\), then \((\throwto \alpha r)\,\vec u \in \SN\).
\end{lemma}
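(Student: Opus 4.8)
The plan is to prove the statement by a single induction on a well-chosen well-founded measure, using the standard characterization that a term lies in \(\SN\) exactly when all of its one-step reducts do, together with the fact that \(t \red t'\) implies \(\SNbound{t'} < \SNbound t\). So I will classify the one-step reducts of \((\throwto \alpha r)\,\vec u\), check each lies in \(\SN\) by the induction hypothesis, and conclude.

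First I would dispose of the empty-sequence case \(\vec u = \epsilon\), that is, show \(\throwto \alpha r \in \SN\) whenever \(r \in \SN\), by induction on \(\SNbound r\). The only one-step reducts of \(\throwto \alpha r\) are \(\throwto \alpha {r'}\) with \(r \red r'\) (handled by the induction hypothesis, as \(\SNbound{r'} < \SNbound r\)) and, when \(r\) is itself a throw \(\throwto \beta {r'}\), the term \(r\) via rule \((\throwRule)\) with context \(\throwto \alpha \Box\) (handled by the hypothesis \(r \in \SN\)). This both settles the base case and guarantees that \(\SNbound{\throwto \alpha r}\) is a well-defined natural number, which I will use as the primary component of the measure below.

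For the general case, write \(\vec u = u_1 \cdots u_n\) and argue by induction on the lexicographically ordered triple \((\SNbound{\throwto \alpha r},\ n,\ \sum_i \SNbound{u_i})\). The key observation is a complete classification of the one-step reducts of \((\throwto \alpha r)\,\vec u\): every redex sits either (1) in the head \(\throwto \alpha r\) --- either inside \(r\), or a top-level \((\throwRule)\) step \(\throwto \alpha {(\throwto \beta {r'})} \red \throwto \beta {r'}\) when \(r\) is a throw --- producing \(h'\,\vec u\) for a head \(h'\) with \(\throwto \alpha r \red h'\); (2) inside some \(u_i\); or (3) the innermost spine redex \((\throwto \alpha r)\,u_1 \red \throwto \alpha r\), an instance of \((\throwRule)\) with context \(\Box\,u_1\), yielding \((\throwto \alpha r)\,u_2 \cdots u_n\). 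No other spine redex is possible, since for \(n \ge 2\) the left component of each outer application is again an application whose leftmost symbol descends from the throw, hence is never a throw, a \(\lambda\)-abstraction, or a constant triggering a list rule. Each reduct again has the shape of a throw-headed term applied to a sequence from \(\SNvec\), so the induction hypothesis applies once the measure is checked to drop: in case (1) any head reduction strictly decreases the primary component \(\SNbound{\throwto \alpha r}\); in case (2) the head and \(n\) are unchanged while \(\sum_i \SNbound{u_i}\) drops; and in case (3) the head is unchanged and \(n\) decreases by one. Since all one-step reducts lie in \(\SN\), so does \((\throwto \alpha r)\,\vec u\).

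The step I expect to require the most care is the spine reduction (3): discarding \(u_1\) via \((\throwRule)\) removes an argument without making any progress on \(r\), so a naive measure such as \(\SNbound r + \sum_i \SNbound{u_i}\) need not decrease when \(u_1\) is already a normal form. This is precisely why \(n\) appears as a separate component. The secondary point to get right is that a throw-head can reduce on its own when \(r\) is again a throw; taking \(\SNbound{\throwto \alpha r}\) rather than \(\SNbound r\) as the primary component makes that case uniform with ordinary reduction inside \(r\), since both are head reductions and hence strictly decrease it.
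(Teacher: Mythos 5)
Your proof is correct and takes essentially the same route as the paper: the base case \(\throwto \alpha r \in \SN\) is established by exactly the same induction on \(\SNbound r\), and the general case is a well-founded induction over SN-bounds with the same classification of one-step reducts (head reductions, reductions inside an argument, and the single spine contraction \((\throwto \alpha r)\,u_1 \red \throwto \alpha r\)). The only difference is bookkeeping: the paper nests an outer induction on the length of \(\vec u\) around an inner induction on \(\SNbound t + \SNbound{(\throwto \alpha r)\,\vec u}\), whereas you flatten this into a single lexicographic triple; both measures decrease for the same reasons in the same cases.
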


\begin{proof}
We prove this result by induction on the length of \(\vec u\).
\begin{enumerate}
\item We prove that we have \(\throwto \alpha r \in \SN\) by induction on \(\SNbound r\).
	We proceed by distinguishing the reductions \(\throwto \alpha r \red q\) and show that we have \(q \in \SN\) for each such a \(q\).
	\begin{enumerate}
	\item Let \(\throwto \alpha {(\throwto \beta t)} \red \throwto \beta t\).
		The result holds by assumption.
	\item Let \(\throwto \alpha r \red \throwto \alpha {r'}\) with \(r \red r'\).
		The result follows from the \IH{}.
	\end{enumerate}
\item We prove that we have \((\throwto \alpha r)\,t\,\vec u \in \SN\) by induction on \(\SNbound t + \SNbound{(\throwto \alpha r)\,\vec u}\).
	It is easy to verify that \(q \in \SN\) for all reductions \((\throwto \alpha r)\,t\,\vec u \red q\).
	\qedhere
\end{enumerate}
\end{proof}

\begin{corollary}
\label{corollary:sn_throw}
If \(r \in \SN\) and \(\vec u \in \SNvec\), then \((\throwto \alpha r)\,\vec u \in \SNint\sigma\).
\end{corollary}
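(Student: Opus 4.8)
The plan is to prove the statement by structural induction on the type \(\sigma\), keeping \(r\), \(\alpha\) and \(\vec u\) universally quantified so that the induction hypothesis applies to \emph{every} sequence in \(\SNvec\); this matters because the \(\arrow\)-case invokes the hypothesis for an extended sequence. The three cases follow the three clauses defining \(\SNint{\cdot}\). The base case \(\sigma = \unitTypeT\) is immediate: since \(\SNint{\unitTypeT} = \SN\), the goal \((\throwto \alpha r)\,\vec u \in \SN\) is exactly Lemma~\ref{lemma:sn_throw}.

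For \(\sigma = \sigma' \arrow \tau\) I would unfold \(\SNint{\sigma' \arrow \tau}\) and take an arbitrary \(s \in \SNint{\sigma'}\). Reading \(((\throwto \alpha r)\,\vec u)\,s\) as \(\throwto \alpha r\) applied to the extended sequence \(\vec u\,s\), I note that \(s \in \SN\) by Lemma~\ref{lemma:sn_reducibility}, so \(\vec u\,s \in \SNvec\); the induction hypothesis for the structurally smaller type \(\tau\) then gives \((\throwto \alpha r)\,(\vec u\,s) \in \SNint\tau\), which is precisely what membership in \(\SNint{\sigma' \arrow \tau}\) demands.

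The only case requiring a genuine observation about reduction is \(\sigma = \listTypeT\sigma'\), where \(\SNint{\listTypeT\sigma'} = \SN \cap \SNlistint{\SNint{\sigma'}}\). Membership in \(\SN\) is again Lemma~\ref{lemma:sn_throw}. For \(\SNlistint{\SNint{\sigma'}}\) I claim the defining premise is satisfied \emph{vacuously}, because \((\throwto \alpha r)\,\vec u\) admits no reduct of the form \(\consT v w\). I would justify this by noting that the head of the term is always a \(\throw\) and stays so under every reduction: a reduction either takes place inside \(r\) or inside a component of \(\vec u\), or it applies rule~\(\throwRule\) --- either absorbing a leading argument via \(\cctx E {\throwto \alpha r} \red \throwto \alpha r\), or collapsing \(\throwto \alpha {(\throwto \beta t)} \red \throwto \beta t\) --- and in each case the reduct is again a \(\throw\) applied to a (possibly empty) argument sequence, never a term with head \((\consTsymbol)\). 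Hence no \(v, w\) with \((\throwto \alpha r)\,\vec u \redd \consT v w\) exist, and membership holds trivially.

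I expect the list case to be the main obstacle, but even there the difficulty is only in spelling out that the leading \(\throw\) is invariant under reduction; all the quantitative work of bounding reduction lengths has already been absorbed into Lemma~\ref{lemma:sn_throw}, so no further induction on reduction bounds is needed here.
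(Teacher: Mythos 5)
Your proposal is correct and follows essentially the same route as the paper: structural induction on \(\sigma\), with the unit and list cases discharged by Lemma~\ref{lemma:sn_throw} plus the observation that \((\throwto \alpha r)\,\vec u\) can never reduce to a term of the form \(\consT v w\), and the \(\arrow\)-case handled by applying the induction hypothesis to the extended sequence \(\vec u\,s\) after using Lemma~\ref{lemma:sn_reducibility} to get \(s \in \SN\). The paper merely states these steps more tersely (e.g.\ ``by distinguishing reductions we see that this reduction is impossible''), whereas you spell out the head-\(\throw\) invariance explicitly; the content is identical.
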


\begin{proof}
We prove this result by structural induction on \(\sigma\).
\begin{enumerate}
\item[(unit)] This case is a direct consequence of Lemma~\ref{lemma:sn_throw}.
\item[(list)] We have to show that \((\throwto \alpha r)\,\vec u \in \SNint{\listTypeT\sigma} = \SN \cap \SNlistint{\SNint\sigma}\).
	As we have \((\throwto \alpha r)\,\vec u \in \SN\) by Lemma~\ref{lemma:sn_throw}, it remains to show that \((\throwto \alpha r)\,\vec u \in \SNlistint{\SNint\sigma}\).
	So, let \((\throwto \alpha r)\,\vec u \redd \consT v w\) for values \(v\) and \(w\).
	By distinguishing reductions we see that this reduction is impossible.
\item[(\(\arrow\))] This case follows directly from the \IH{} and Lemma~\ref{lemma:sn_reducibility}.
	\qedhere
\end{enumerate}
\end{proof}

It would be convenient if we could prove \(t \in \SNint\sigma\) by showing that for all reductions \(t \red t'\) we have \(t' \in \SNint\sigma\).
Unfortunately, this result does not hold in general.
For example, whereas the term \(\consT \omega \nilT\) is in normal form, we do not have \(\consT \omega \nilT \in \SNint{\listTypeT{\unitT \arrow \unitT}}\).
Similarly to Girard \etal{}~\cite{girard1989}, we restrict ourselves to the terms \(t\) that are \emph{neutral}.

\begin{definition}
A term is \emph{neutral} if it is not of the shape \(\lambda x.r\), \(\nrecTsymbol\;v_r\;v_s\), or \(\consT v w\).
\end{definition}

\begin{lemma}
\label{lemma:sn_neutral}
If \(t\) is neutral, and for all terms \(t'\) with \(t \red t'\) we have \(t' \in \SNint\sigma\), then \(t \in \SNint \sigma\).
\end{lemma}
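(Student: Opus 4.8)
The plan is to adapt the standard proof of Girard's third reducibility condition (\emph{CR3}) to \lambdaCatch, proceeding by structural induction on the type \(\sigma\). Throughout I would freely use the facts already available for the interpretation: \(\SNint\sigma \subseteq \SN\) together with the reducibility of variable-headed terms (Lemma~\ref{lemma:sn_reducibility}), closure of \(\SNint\sigma\) under \(\redd\) (Lemma~\ref{lemma:sn_reducibility_redd}), and the fact that throws are reducible (Corollary~\ref{corollary:sn_throw}). The unit case is immediate: \(\SNint\unitTypeT = \SN\), and a term is strongly normalizing exactly when all of its one-step reducts are, so the hypothesis gives \(t \in \SN\).

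The list case, \(\SNint{\listTypeT\sigma} = \SN \cap \SNlistint{\SNint\sigma}\), is where neutrality first does real work. That \(t \in \SN\) follows as in the unit case. For \(t \in \SNlistint{\SNint\sigma}\) I would consider any \(v, w\) with \(t \redd \consT v w\); since \(t\) is neutral it is not itself of the form \(\consT v w\), so this reduction has positive length and factors as \(t \red t' \redd \consT v w\). By hypothesis \(t' \in \SNint{\listTypeT\sigma} \subseteq \SNlistint{\SNint\sigma}\), and feeding \(t' \redd \consT v w\) into the defining rule of \(\SNlistint{\cdot}\) yields \(v \in \SNint\sigma\) and \(w \in \SNlistint{\SNint\sigma}\). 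Hence \(t \in \SNlistint{\SNint\sigma}\), and the case is done.

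The arrow case is the heart of the argument and the step I expect to be the main obstacle. With \(\SNint{\sigma\arrow\tau} = \{t \mid \forall s \in \SNint\sigma.\ ts \in \SNint\tau\}\), I fix \(s \in \SNint\sigma\) and must show \(ts \in \SNint\tau\). Since \(s \in \SN\) by Lemma~\ref{lemma:sn_reducibility}, I would run an inner induction on \(\SNbound s\) and close the case by applying the induction hypothesis of the outer (type) induction — that is, \emph{CR3} at the strictly smaller type \(\tau\) — to the term \(ts\). This requires that \(ts\) be neutral and that all of its one-step reducts lie in \(\SNint\tau\). The reducts \(t's\) (from \(t \red t'\)) are reducible because \(t' \in \SNint{\sigma\arrow\tau}\) by hypothesis and \(s \in \SNint\sigma\); the reducts \(ts'\) (from \(s \red s'\)) are reducible by the inner induction hypothesis, using \(s' \in \SNint\sigma\) (Lemma~\ref{lemma:sn_reducibility_redd}) and \(\SNbound{s'} < \SNbound s\). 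The delicate point is ruling out any contraction of \(ts\) at the root, and this is exactly what the neutrality restriction is for: a neutral \(t\) is neither an abstraction, nor a (possibly partially applied) constructor or recursor value, so \(ts\) can neither fire a \(\betav\)-, \(\nilT\)-, or \(\consTsymbol\)-redex at its root nor be itself an introduction form; this simultaneously keeps \(ts\) neutral and blocks the appearance of new root redexes. The one root reduction that survives is the \(\throwRule\) step available when \(t \equiv \throwto \alpha r\), whose contractum \(\throwto \alpha r\) is reducible by Corollary~\ref{corollary:sn_throw}, after noting that \(r \in \SN\) follows from the reducts of \(t\) being strongly normalizing. Verifying that the neutrality condition genuinely excludes every root redex across all the reduction rules of \lambdaCatch{} is the book-keeping I would budget the most care for; once it is in place, \emph{CR3} at \(\tau\) delivers \(ts \in \SNint\tau\) and completes the induction.
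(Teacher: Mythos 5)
Your proof is structured exactly like the paper's: structural induction on \(\sigma\), the unit case immediate, the list case factoring \(t \redd \consT v w\) through a one-step reduct via neutrality, and the arrow case closed by applying the type induction hypothesis (CR3 at \(\tau\)) to \(ts\), with an inner induction on \(\SNbound s\) for the reducts \(ts'\). However, your root-reduction analysis in the arrow case has a concrete hole. You claim that ``the one root reduction that survives is the (\(\throwRule\)) step available when \(t \equiv \throwto \alpha r\)''. That is not true: the rule \(\cctx E {\throwto \alpha q} \red \throwto \alpha q\) also fires through the context \(E = v\,\Box\), i.e.\ when \(t\) is a \emph{value} and \(s \equiv \throwto \alpha q\). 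Neutrality does not prevent \(t\) from being a value --- a variable is a neutral value, and so is, e.g., \(\lrecTsymbol\,v_r\) under the paper's definition --- and \(s\) may well be a throw, since \(\throwto \alpha q \in \SNint\sigma\) whenever \(q \in \SN\) (Corollary~\ref{corollary:sn_throw}). The paper handles this as a separate case: from \(s \equiv \throwto \alpha q \in \SNint\sigma \subseteq \SN\) (Lemma~\ref{lemma:sn_reducibility}) one gets \(q \in \SN\), and then \(\throwto \alpha q \in \SNint\tau\) by Corollary~\ref{corollary:sn_throw}. The repair uses only tools you already invoke, but as written your case analysis is incomplete and the quoted claim is false.

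A second, subtler point: your gloss of neutrality --- ``neither an abstraction, nor a (possibly partially applied) constructor or recursor value'' --- is strictly stronger than the paper's definition, which excludes only the shapes \(\lambda x.r\), \(\lrecTsymbol\,v_r\,v_s\) and \(\consT v w\); the partially applied \(\lrecTsymbol\,v_r\) and \((\consTsymbol)\,v\) \emph{are} neutral by that definition. You rely on the stronger reading exactly where you say it ``simultaneously keeps \(ts\) neutral'': with the paper's definition, \(t \equiv \lrecTsymbol\,v_r\) is neutral while \(t\,v_s \equiv \lrecTsymbol\,v_r\,v_s\) is not, so CR3 at \(\tau\) cannot be applied to \(ts\) in that case. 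The paper's own proof elides this point as well, so the discrepancy does not put you at a disadvantage, but in a self-contained write-up you should either state that you are proving the lemma for the strengthened notion of neutral (which is what your argument actually does, and which still suffices for the lemma's uses in Lemmas~\ref{lemma:sn_lambda} and~\ref{lemma:sn_nrec}) or explicitly address the heads \(\lrecTsymbol\,v_r\) and \((\consTsymbol)\,v\).
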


\begin{proof}
The results is proven by structural induction on \(\sigma\).
\begin{enumerate}
\item[(unit)] The result is immediate. 
\item[(list)] Let \(t\) be a neutral term such that for all terms \(t'\) with \(t \red t'\) we have \(t' \in \SNint{\listTypeT\sigma}\).
	We have to prove that \(t \in \SNint{\listTypeT\sigma} = \SN \cap \SNlistint{\SNint\sigma}\).
	By Lemma~\ref{lemma:sn_reducibility} we have \(\SNint{\listTypeT\sigma} \subseteq \SN\), and therefore \(t \in \SN\) as \(t' \in \SN\) for each \(t'\) with \(t \red t'\) by assumption.
	It remains to show that \(t \in \SNlistint{\SNint\sigma}\), so let \(t \redd \consT v w\) for values \(v\) and \(w\).
	Since \(t\) is neutral, there should be a term \(t'\) such that \(t \red t' \redd \consT v w\).
	For such a term \(t'\) we have \(t' \in \SNint{\listTypeT\sigma}\) by assumption, hence \(v \in \SNint\sigma\) and \(w \in \SNlistint{\SNint\sigma}\).
	Therefore, \(t \in \SNlistint{\SNint\sigma}\) as required.
\item[(\(\arrow\))] Let \(t\) be a neutral term such that for all terms \(t'\) with \(t \red t'\) we have \(t' \in \SNint{\sigma\arrow\tau}\).
	We have to prove that \(t \in \SNint{\sigma\arrow\tau}\), so let \(r \in \SNint\sigma\).
	By the \IH{} it is sufficient to show that if \(tr \red q\) then \(q \in \SNint\tau\).
	By Lemma~\ref{lemma:sn_reducibility} we have \(r \in \SN\), so we proceed by induction on \(\SNbound r\). 
	We distinguish the following reductions.
	\begin{enumerate}
	\item Let \(tr \red t'r\) with \(t \red t'\). 
		Now we have \(t' \in \SNint{\sigma\arrow\tau}\) by assumption. 
		Hence, \(t'r \in \SNint\tau\) by definition, so we are done.
	\item Let \(tr \red tr'\) with \(r \red r'\). 
		The result follows from the \IH{}.
	\item Let \((\throwto \alpha s)\;r \red \throwto \alpha s\).
			By Lemma~\ref{lemma:sn_reducibility} we have \(\SNint{\sigma\arrow\tau} \subseteq \SN\), and therefore \(\throwto \alpha s \in \SN\) as \(t' \in \SN\) for each \(t'\) with \(\throwto \alpha s \red t'\) by assumption.
			As a consequence we have \mbox{\(\throwto \alpha s \in \SNint\tau\)} by Corollary~\ref{corollary:sn_throw}.
	\item Let \(v\;(\throwto \alpha s) \red \throwto \alpha s\).
		By assumption we have \(\throwto \alpha s \in \SNint\sigma\), so \(\throwto \alpha s \in \SN\) by Lemma~\ref{lemma:sn_reducibility}.
		Hence, \(\throwto \alpha s \in \SNint\tau\) by Corollary~\ref{corollary:sn_throw}.
	\end{enumerate}
	No other reductions are possible because \(t\) is neutral (so, in particular it cannot be of the shape \(\lambda x.s\) or \(\nrecTsymbol\;v_r\;v_s\)).
	\qedhere
\end{enumerate}
\end{proof}

\begin{lemma}
\label{lemma:sn_lambda}
If \(r \in \SN\) and \(\subst t x r \in \SNint\sigma\), then \((\lambda x.t)\,r \in \SNint\sigma\).
\end{lemma}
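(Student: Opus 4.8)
The plan is to route everything through the neutrality lemma (Lemma~\ref{lemma:sn_neutral}). First I would observe that \((\lambda x.t)\,r\) is \emph{neutral}: it is an application whose function part is the abstraction \(\lambda x.t\), so it is neither of the shape \(\lambda x.r'\), nor \(\nrecTsymbol\;v_r\;v_s\), nor \(\consT v w\). Hence, by Lemma~\ref{lemma:sn_neutral}, it suffices to show that every one-step reduct \(q\) of \((\lambda x.t)\,r\) lies in \(\SNint\sigma\).

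Before the case analysis I would record that both \(t \in \SN\) and \(r \in \SN\). The latter is a hypothesis; the former follows because \(\subst t x r \in \SNint\sigma \subseteq \SN\) by Lemma~\ref{lemma:sn_reducibility}, and any infinite reduction of \(t\) would lift to one of \(\subst t x r\). This makes the measure \(\SNbound t + \SNbound r\) well-defined, and I would prove the statement by induction on it.

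Next I would enumerate the reductions \((\lambda x.t)\,r \red q\). Since \(\lambda x.t\) is a value, the available redexes are exactly: (i) a top-level \(\betav\)-step, possible when \(r\) is a value, giving \(q \equiv \subst t x r\), which lies in \(\SNint\sigma\) directly by hypothesis; (ii) a top-level \(\throwRule\)-step through the context \((\lambda x.t)\,\Box\), possible when \(r \equiv \throwto \alpha s\), giving \(q \equiv \throwto \alpha s \equiv r\), where \(s \in \SN\) follows from \(r \in \SN\), so \(q \in \SNint\sigma\) by Corollary~\ref{corollary:sn_throw}; (iii) an internal step \(t \red t'\), giving \(q \equiv (\lambda x.t')\,r\), where \(\subst t x r \red \subst {t'} x r\), hence \(\subst {t'} x r \in \SNint\sigma\) by Lemma~\ref{lemma:sn_reducibility_redd}, and since \(\SNbound{t'} < \SNbound t\) the induction hypothesis yields \(q \in \SNint\sigma\); (iv) an internal step \(r \red r'\), giving \(q \equiv (\lambda x.t)\,r'\), where \(\subst t x r \redd \subst t x {r'}\), hence \(\subst t x {r'} \in \SNint\sigma\) by Lemma~\ref{lemma:sn_reducibility_redd}, \(r' \in \SN\), and the induction hypothesis again yields \(q \in \SNint\sigma\).

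The routine parts are (iii) and (iv), which are standard once the ``substitution preserves reduction'' facts are combined with Lemma~\ref{lemma:sn_reducibility_redd}. I do not expect a deep obstacle here; the part requiring the most care is an easy-to-miss one, namely case (ii): because \(\lambda x.t\) is a value, the term \((\lambda x.t)\,r\) also admits a \(\throwRule\)-reduction whenever \(r\) is a throw, and that reduct must be handled via Corollary~\ref{corollary:sn_throw} rather than via the induction hypothesis. Overlooking this reduction, or forgetting that the induction measure needs \(t \in \SN\), would be the likeliest source of error; the central \(\betav\)-case itself is immediate from the assumption \(\subst t x r \in \SNint\sigma\).
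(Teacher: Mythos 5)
Your overall route coincides with the paper's: both observe that \((\lambda x.t)\,r\) is neutral, invoke Lemma~\ref{lemma:sn_neutral} to reduce the problem to the one-step reducts, induct on \(\SNbound t + \SNbound r\), dispatch the \(\betav\)-reduct \(\subst t x v\) by the hypothesis, and dispatch the reduct \(\throwto \alpha s\) (arising when \(r\) is a throw) by Corollary~\ref{corollary:sn_throw}. These are exactly the two cases the paper presents as the interesting ones, and you handle them the same way.

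However, your case (iii) contains a genuine gap. You claim that \(t \red t'\) implies \(\subst t x r \red \subst {t'} x r\), and this is false in a \cbv{} calculus in which variables are values: a redex of \(t\) whose value side-condition is witnessed by the variable \(x\) is destroyed when a non-value \(r\) is substituted for it. Concretely, take \(t \equiv (\lambda y.\unitT)\,x\) and \(r \equiv z\,\unitT\) with \(z\) a variable distinct from \(x\). Then \(t \red \unitT \equiv t'\) is a \(\betav\)-step (since \(x\) is a value) and \(r \in \SN\), but \(\subst t x r \equiv (\lambda y.\unitT)\,(z\,\unitT)\) is a normal form -- its argument is neither a value nor a throw -- so it does not reduce to \(\subst {t'} x r \equiv \unitT\). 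Hence the appeal to Lemma~\ref{lemma:sn_reducibility_redd} does not go through, and some separate argument is needed for steps of \(t\) that become blocked under the substitution; this is precisely the delicate point of \cbv{} reducibility, and your proof provides no such argument. The same invalid lifting principle underlies your preliminary claim that \(t \in \SN\) (``any infinite reduction of \(t\) would lift to one of \(\subst t x r\)''), which you need for the induction measure to be well-defined. Note also that you cannot escape this by restricting to values \(r\): in Corollary~\ref{corollary:sn_aux} the lemma is applied with \(r\) an arbitrary, possibly open and non-value, member of \(\SNint\sigma\). Your case (iv) is fine, since reducing every copy of \(r\) inside \(\subst t x r\) is legitimate under the compatible closure. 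To be fair, the paper leaves the congruence cases (iii) and (iv) implicit, so your proposal goes wrong exactly where the paper is silent; but the justification you give is incorrect as stated, and repairing it requires an idea that is present in neither text.
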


\begin{proof}
We prove this result by well-founded induction on \(\SNbound t + \SNbound r\).
By Lemma~\ref{lemma:sn_neutral} it is sufficient to show that for each \(q\) with \((\lambda x.t)\,r \red q\) we have \(q \in \SNint\sigma\).
We consider some interesting reductions.
\begin{enumerate}
\item Let \((\lambda x.t)\,v \red \subst t x v\).
	The result holds by assumption.
\item Let \((\lambda x.t)\, (\throwto \beta r) \red \throwto \beta r\).
	In this case we have \(\throwto \beta r \in \SNint\sigma\) by Corollary~\ref{corollary:sn_throw}.
	\qedhere
\end{enumerate}
\end{proof}

\begin{lemma}
\label{lemma:sn_cons}
If \(t \in \SNint\sigma\) and \(s \in \SNint{\listTypeT\sigma}\), then \(\consT t s \in \SNint{\listTypeT\sigma}\).
\end{lemma}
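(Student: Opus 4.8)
The plan is to unfold the definition $\SNint{\listTypeT\sigma} = \SN \cap \SNlistint{\SNint\sigma}$ and establish the two conjuncts separately. First I would record, using Lemma~\ref{lemma:sn_reducibility}, that $t \in \SNint\sigma \subseteq \SN$ and $s \in \SNint{\listTypeT\sigma} \subseteq \SN$, so both $t$ and $s$ are strongly normalizing. Note that $\consT t s$ is \emph{not} neutral when $t$ and $s$ are both values (then it is exactly of the excluded shape $\consT v w$), so Lemma~\ref{lemma:sn_neutral} is unavailable and a direct argument is needed.

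For $\consT t s \in \SN$, I would argue by well-founded induction on $\SNbound t + \SNbound s$, showing that every one-step reduct of $\consT t s \equiv (\consTsymbol)\,t\,s$ is again in $\SN$. A reduction inside $t$ gives $\consT {t'} s$ with $\SNbound {t'} < \SNbound t$, handled by the induction hypothesis, and symmetrically for a reduction inside $s$. The remaining contractions are the two ($\throwRule$) steps: if $t \equiv \throwto \alpha r$ then $(\consTsymbol)\,(\throwto \alpha r)$ is a redex in a $v\,\Box$ context and the term reduces to $(\throwto \alpha r)\,s$, which lies in $\SN$ by Lemma~\ref{lemma:sn_throw} (since $r, s \in \SN$); and if $t$ is a value and $s \equiv \throwto \alpha r$ then, because $(\consTsymbol)\,t$ is a value, the term reduces to $\throwto \alpha r \in \SN$. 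No further reducts are possible, so $\consT t s \in \SN$.

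For $\consT t s \in \SNlistint{\SNint\sigma}$, the defining rule requires that whenever $\consT t s \redd \consT v w$ for values $v, w$, we have $v \in \SNint\sigma$ and $w \in \SNlistint{\SNint\sigma}$. The key observation is that such a reduction must decompose: every reduct of $(\consTsymbol)\,t\,s$ either stays of the form $\consT {t^*} {s^*}$ with $t \redd t^*$ and $s \redd s^*$, or else a ($\throwRule$) step fires and the term becomes $\throw$-headed, from which a $\consT{}{}$ term can never again be reached. Since the target $\consT v w$ is in cons form, no ($\throwRule$) step was taken, hence $t \redd v$ and $s \redd w$. Lemma~\ref{lemma:sn_reducibility_redd} then yields $v \in \SNint\sigma$ (from $t \in \SNint\sigma$) and $w \in \SNint{\listTypeT\sigma}$ (from $s \in \SNint{\listTypeT\sigma}$), so in particular $w \in \SNlistint{\SNint\sigma}$.

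Combining the two parts gives $\consT t s \in \SN \cap \SNlistint{\SNint\sigma} = \SNint{\listTypeT\sigma}$. I expect the main obstacle to be making the decomposition argument in the $\SNlistint$ part fully rigorous -- precisely, justifying that once a ($\throwRule$) contraction has fired the term is permanently $\throw$-headed, so that reaching $\consT v w$ forces $t \redd v$ and $s \redd w$ -- together with the careful bookkeeping of the $v\,\Box$ contexts in the $\SN$ part.
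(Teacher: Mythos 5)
Your proposal is correct and follows essentially the same route as the paper's proof: strong normalization of \(\consT t s\) by induction on \(\SNbound t + \SNbound s\) with the two (\(\throwRule\)) cases discharged via Lemma~\ref{lemma:sn_throw} and Lemma~\ref{lemma:sn_reducibility}, and membership in \(\SNlistint{\SNint\sigma}\) by decomposing \(\consT t s \redd \consT v w\) into \(t \redd v\) and \(s \redd w\) and applying Lemma~\ref{lemma:sn_reducibility_redd}. The only difference is presentational: you spell out the ``permanently \(\throw\)-headed'' argument that the paper compresses into ``by distinguishing reductions.''
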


\begin{proof}
First we have to prove that \(\consT t s \in \SN\).
That means, for each \(q\) with \(\consT t s \red q\) we have \(q \in \SN\).
We prove this result by induction on \(\SNbound t + \SNbound s\).
We consider the following reductions.
\begin{enumerate}
\item Let \(\consT {\throwto \alpha r} s \red (\throwto \alpha r)\;s\).
	Since we have \(\throwto \alpha r \in \SNint\sigma\) and \(s \in \SNint{\listTypeT\sigma}\) by assumption, we obtain that \(r,s \in \SN\) by Lemma~\ref{lemma:sn_reducibility}.
	Therefore, \((\throwto \alpha r)\;s \in \SN\) by Lemma~\ref{lemma:sn_throw}.
\item Let \(\consT v {\throwto \alpha r} \red \throwto \alpha r\).
	Since we have \(\throwto \alpha r \in \SNint{\listTypeT\sigma}\) by assumption, we
	obtain that \(\throwto \alpha r \in \SN\) by Lemma~\ref{lemma:sn_reducibility}.
\end{enumerate}
Secondly, we have to prove that \(\consT t s \in \SNlistint{\SNint\sigma}\).
So, let \(\consT t s \redd \consT v w\) for values \(v\) and \(w\).
By distinguishing reductions we obtain that \(t \redd v\) and \(s \redd w\).
Therefore, we have \(v \in \SNint\sigma\) and \(w \in \SNlistint{\SNint\sigma}\) by Lemma~\ref{lemma:sn_reducibility_redd}.
Hence, \(\consT t s \in \SNlistint{\SNint\sigma}\) as required.
\end{proof}

\begin{lemma}
\label{lemma:sn_catch}
If \(\psi\) is \arrowFree{} and \(r \in \SNint\psi\), then \(\catchin \alpha r \in \SNint\psi\).
\end{lemma}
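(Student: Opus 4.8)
The plan is to collapse the type interpretation to plain strong normalization and then reason about the reducts of $\catchin \alpha r$ directly. Since $\psi$ is \arrowFree{}, Lemma~\ref{lemma:sn_reducibility_arrow_free} gives $\SN \subseteq \SNint\psi$ and Lemma~\ref{lemma:sn_reducibility} gives $\SNint\psi \subseteq \SN$, so $\SNint\psi = \SN$. Consequently the hypothesis $r \in \SNint\psi$ is just $r \in \SN$, and the goal $\catchin \alpha r \in \SNint\psi$ reduces to proving $\catchin \alpha r \in \SN$. This is the point where the \arrowFree{} restriction on \catch{} pays off: for a general type the same statement fails (as the remark following the example in the text shows), but here membership in $\SNint\psi$ carries no extra function-space content beyond strong normalization.

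To show $\catchin \alpha r \in \SN$ I would use that a term lies in $\SN$ exactly when all of its one-step reducts do, and verify $q \in \SN$ for every $q$ with $\catchin \alpha r \red q$. The induction is by well-founded recursion on the lexicographically ordered pair $(\SNbound r, \sizeof r)$. The redexes of $\catchin \alpha r$ are exhausted by four possibilities. First, a compatible-closure step $\catchin \alpha r \red \catchin \alpha {r'}$ with $r \red r'$; here $r' \in \SN$ and $\SNbound {r'} < \SNbound r$, so the first component strictly drops and the \IH{} yields $\catchin \alpha {r'} \in \SN$. Second, the step $(\catchRule 1)$ with $r \equiv \throwto \alpha t$, whose reduct is $\catchin \alpha t$. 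Third, the step $(\catchRule 2)$ with $r \equiv \throwto \beta v$ and $\alpha \notin \{\beta\}\cup\FCV v$, whose reduct is $\throwto \beta v \equiv r \in \SN$. Fourth, the step $(\catchRule 3)$ with $r \equiv v$ a value and $\alpha \notin \FCV v$, whose reduct is $v \equiv r \in \SN$. Cases three and four are immediate from the hypothesis, and $\catchin \alpha r$ is not itself an $E$-context redex, so no step of rule $(\throwRule)$ fires at the top. (One may alternatively observe that $\catchin \alpha r$ is \emph{neutral} and invoke Lemma~\ref{lemma:sn_neutral}, but the bare $\SN$ characterization suffices.)

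The delicate case, and the one dictating the choice of measure, is $(\catchRule 1)$. There the reduct $\catchin \alpha t$ must be handled by the \IH{}, yet $\SNbound t$ need not be strictly smaller than $\SNbound{\throwto \alpha t}$ (take $t \equiv \unitT$, where both are $0$), so an induction on $\SNbound r$ alone does not close. The fix is the second component: every reduction of $t$ lifts to one of $\throwto \alpha t$, giving $\SNbound t \le \SNbound {\throwto \alpha t} = \SNbound r$, while $\sizeof t < \sizeof {\throwto \alpha t} = \sizeof r$; hence $(\SNbound t, \sizeof t)$ is lexicographically below $(\SNbound r, \sizeof r)$. Since $\throwto \alpha t \in \SN$ forces $t \in \SN$, the \IH{} applies and delivers $\catchin \alpha t \in \SN$. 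With all reducts in $\SN$ we conclude $\catchin \alpha r \in \SN = \SNint\psi$, as required.
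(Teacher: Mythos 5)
Your proof is correct and matches the paper's own argument essentially step for step: both reduce the goal to $\catchin \alpha r \in \SN$ via Lemma~\ref{lemma:sn_reducibility_arrow_free}, proceed by well-founded induction on the lexicographic pair $(\SNbound r, \sizeof r)$, and dispatch the same case analysis, with the $(\catchRule 1)$ case handled by exactly the same size-decrease observation. Your additional remarks (why the second component of the measure is needed, and why no $(\throwRule)$ step fires at the top) are correct elaborations of what the paper leaves implicit.
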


\begin{proof}
By Lemma~\ref{lemma:sn_reducibility_arrow_free} it is sufficient to prove that \(\catchin \alpha r \in \SN\). 
We prove this result by well-founded induction on the lexicographic order on \(\SNbound r\) and \(\sizeof r\). 
Let \(q\) with \(\catchin \alpha r \red q\). 
It remains to prove prove that \(q \in \SN\).
We consider the following interesting reductions.
\begin{enumerate}
\item Let \(\catchin \alpha {\throwto \alpha r} \red \catchin \alpha r\).
	The result follows from the \IH{} as we have \(\SNbound r \le \SNbound{\throwto \alpha r}\) and \(\sizeof r < \sizeof{\throwto \beta r}\).
\item Let \(\catchin \alpha {\throwto \beta v} \red  \throwto \beta v \).
	The result holds by Lemma~\ref{lemma:sn_reducibility}.
\item Let \(\catchin \alpha v \red v\).
	The result holds by Lemma~\ref{lemma:sn_reducibility}.
	\qedhere
\end{enumerate}
\end{proof}

\begin{lemma}
\label{lemma:sn_nrec}
If \(r \in \SNint\rho\), \(s \in \SNint{\sigma \arrow \listTypeT\sigma \arrow \listTypeT\sigma}\), and \(t \in \SNint{\listTypeT\sigma}\), then \(\lrecT r s t \in \SNint\rho\).
\end{lemma}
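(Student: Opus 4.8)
The plan is to reuse the reducibility infrastructure already in place: I would observe that \(\lrecT r s t\) is neutral and invoke Lemma~\ref{lemma:sn_neutral} to reduce the goal to inspecting one-step reducts. First I would note that \(r,s,t \in \SN\), since each reducibility set is contained in \(\SN\) by Lemma~\ref{lemma:sn_reducibility}, so the bounds \(\SNbound r\), \(\SNbound s\) and \(\SNbound t\) are all defined. I would then prove the statement by well-founded induction on the lexicographic measure \((\SNbound r + \SNbound s + \SNbound t,\; \sizeof t)\). Since \(\lrecT r s t = \lrecTsymbol\,r\,s\,t\) is \(\lrecTsymbol\) applied to three arguments, it is neutral (it is not a \(\lambda\)-abstraction, a partially applied recursor, or of the form \(\consT v w\)), so by Lemma~\ref{lemma:sn_neutral} it suffices to show that every \(q\) with \(\lrecT r s t \red q\) lies in \(\SNint\rho\).

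I would organize the reducts according to where the contracted redex sits on the spine \((((\lrecTsymbol)\,r)\,s)\,t\). The internal reductions \(\lrecT {r'} s t\), \(\lrecT r {s'} t\) and \(\lrecT r s {t'}\) (with \(r \red r'\), and so on) are handled by the \IH{}: the reduced component stays in its reducibility set by Lemma~\ref{lemma:sn_reducibility_redd}, and the first component of the measure strictly decreases. The reductions that fire a \throw{} along the spine all produce a term of the shape \((\throwto \alpha p)\,\vec u\): contracting \(\lrecTsymbol\,(\throwto \alpha p)\) gives \((\throwto \alpha p)\,s\,t\), contracting \((\lrecTsymbol\,r)\,(\throwto \alpha p)\) (with \(r\) a value) gives \((\throwto \alpha p)\,t\), and \(\lrecT r s {(\throwto \alpha p)}\) (with \(r,s\) values) gives \(\throwto \alpha p\). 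In each case \(p \in \SN\) because the relevant premise is in \(\SN\), and the remaining arguments are in \(\SN\), so \(\vec u \in \SNvec\) and Corollary~\ref{corollary:sn_throw} yields membership in \(\SNint\rho\). The reduction \(\lrecT r s {\nilT} \red r\) (with \(r,s\) values) gives \(r \in \SNint\rho\) directly by hypothesis.

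The remaining case, and the only delicate one, is the \((\consTsymbol)\)-reduction \(\lrecT r s {(\consT {v_h} {v_t})} \red s\,v_h\,v_t\,(\lrecT r s {v_t})\), with \(r,s\) values. Here I would first extract reducibility of the sublists from \(t = \consT {v_h} {v_t} \in \SNint{\listTypeT\sigma} = \SN \cap \SNlistint{\SNint\sigma}\): the reflexive reduction \(t \redd \consT {v_h} {v_t}\) gives \(v_h \in \SNint\sigma\) and \(v_t \in \SNlistint{\SNint\sigma}\), and since \(v_t \in \SN\) as a subterm of \(t\), we get \(v_t \in \SNint{\listTypeT\sigma}\). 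Now \(\lrecT r s {v_t} \in \SNint\rho\) by the \IH{} applied to the triple \((r,s,v_t)\): its measure is smaller because \(\SNbound{v_t} \le \SNbound t\) while \(\sizeof{v_t} < \sizeof t\), so the lexicographic measure strictly decreases even when the bounds do not. Finally, applying the interpretation of the (step) type of \(s\), i.e.\ \(\SNint{\sigma \arrow \listTypeT\sigma \arrow \rho \arrow \rho}\), successively to \(v_h \in \SNint\sigma\), \(v_t \in \SNint{\listTypeT\sigma}\) and \(\lrecT r s {v_t} \in \SNint\rho\) yields \(s\,v_h\,v_t\,(\lrecT r s {v_t}) \in \SNint\rho\).

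The main obstacle is precisely this \((\consTsymbol)\)-step: the recursive call \(\lrecT r s {v_t}\) is on a structurally smaller list but not necessarily on a term of strictly smaller reduction bound (the whole reduction height may be concentrated in the tail \(v_t\)), which is exactly why the size \(\sizeof t\) must be carried as a secondary, lexicographic component of the induction measure. Everything else reduces to the previously established closure lemmas: Lemma~\ref{lemma:sn_neutral} for neutrality, Lemma~\ref{lemma:sn_reducibility_redd} for stability of reducibility under reduction, and Corollary~\ref{corollary:sn_throw} for the \throw{}-reducts.
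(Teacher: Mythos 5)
Your proof is correct, and its skeleton is the same as the paper's: neutrality of \(\lrecT r s t\) plus Lemma~\ref{lemma:sn_neutral} to reduce the goal to one-step reducts, Corollary~\ref{corollary:sn_throw} for the reducts where a \throw{} fires along the spine, the hypothesis on \(r\) for the \(\nilT\)-step, and the induction hypothesis for the recursive call in the \((\consTsymbol)\)-step. Where you genuinely diverge is the well-founded measure. The paper uses the additive measure \(\SNbound r + \SNbound s + \SNbound t + \sizeofnf t\): internal reductions decrease a \(\SNbound\)-summand while leaving \(\sizeofnf\) unchanged (reduction preserves the normal form), and the \((\consTsymbol)\)-step decreases it because \(\sizeofnf {v_t} < \sizeofnf {(\consT {v_h} {v_t})}\). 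You instead take the lexicographic pair \((\SNbound r + \SNbound s + \SNbound t,\ \sizeof t)\) with the plain size of \(t\) as tiebreaker --- the same device the paper itself uses in Lemma~\ref{lemma:sn_catch}. Both measures work, and each buys something: the additive measure handles every case by a single sum, while yours avoids \(\sizeofnf\) altogether, which is arguably cleaner since \(\sizeofnf\) is only well defined because normal forms are unique, so the paper's measure silently leans on confluence, whereas plain \(\sizeof\) needs no such support. Two further points in your favour: you make explicit what the paper suppresses (the internal reductions, handled via Lemma~\ref{lemma:sn_reducibility_redd}; the \throw{} firing at the second and third argument positions; and the fact that \(v_t \in \SN\) is needed in addition to \(v_t \in \SNlistint{\SNint\sigma}\) to conclude \(v_t \in \SNint{\listTypeT\sigma}\)); and you correctly read the hypothesis on \(s\) as \(s \in \SNint{\sigma \arrow \listTypeT\sigma \arrow \rho \arrow \rho}\) --- the type printed in the lemma statement is a typo, and the paper's own proof also uses the step type when it closes the \((\consTsymbol)\)-case ``from the assumption''.
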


\begin{proof}
We prove this result by well-founded induction on \(\SNbound r + \SNbound s + \SNbound t + \sizeofnf t\).
By Lemma~\ref{lemma:sn_neutral} it is sufficient to show that for each \(q\) with \(\lrecT r s t \red q\) we have \(q \in \SNint\rho\).
We consider the following interesting reductions.
\begin{enumerate}
\item Let \(\lrecT {v_r} {v_s} \nilT \red v_r\). The result holds by assumption.
\item Let \(\lrecT {v_r} {v_s} {(\consT {v_h} {v_t})} \red v_s\;v_h\;v_t\;(\lrecT {v_r} {v_s} {v_t})\).
	By the definition of \(\consT {v_h} {v_t} \in \SNint{\listTypeT\sigma}\) we obtain that \(v_h \in \SNint\sigma\) and \(v_t \in \SNint{\listTypeT\sigma}\).
	Therefore, we have \(\lrecT {v_r} {v_s} {v_t} \in \SNint\rho\) by the \IH{} as \(\sizeofnf {v_t} \le \sizeofnf{\consT {v_h} {v_t}}\).
	Now, the result follows from the assumption.
\item Let \(\lrecT {(\throwto \alpha r)} s t \red (\throwto \alpha r)\;s\;t\).
	By assumption and Lemma~\ref{lemma:sn_reducibility} we have \(r,s,t \in \SN\), hence \((\throwto \alpha r)\;s\;t \in \SNint\rho\) by Corollary~\ref{corollary:sn_throw}.
	\qedhere
\end{enumerate}
\end{proof}

\begin{corollary}
\label{corollary:sn_aux}
If \(\typed {x_1:\rho_1, \ldots, x_n:\rho_n} \Delta t \tau\) and \(r_i \in \SNint{\rho_i}\) for all \(1 \le i \le n\), then
\[
	t[x_1:=r_1, \ldots, x_n:=r_n] \in \SNint\tau.
\]
\end{corollary}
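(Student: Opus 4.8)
Write \(\Gamma \defined x_1:\rho_1, \ldots, x_n:\rho_n\) and abbreviate the simultaneous substitution by \(\theta\), so that \(t\theta \defined t[x_1:=r_1, \ldots, x_n:=r_n]\). The plan is to prove the statement by induction on the derivation of \(\typed \Gamma \Delta t \tau\), reading it as universally quantified over all reducible choices \(r_i \in \SNint{\rho_i}\); this quantification is exactly what lets the induction hypothesis be applied to an \emph{extended} substitution in the abstraction case. Throughout I use Barendregt's variable convention to keep the bound variables of \(t\) fresh for every \(r_i\), so that substitution commutes with the term constructors and no variable capture occurs. The role of the closure lemmas proven above is precisely to make each case of this induction a short appeal.

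For the leaves, a variable \(x_i\) gives \(x_i\theta \equiv r_i \in \SNint{\rho_i}\) by assumption, while \(\unitT\theta \equiv \unitT \in \SN = \SNint\unitTypeT\) and \(\nilT\theta \equiv \nilT \in \SNint{\listTypeT\sigma}\) hold because neither term reduces (and \(\nilT\) cannot reduce to a \((\consTsymbol)\), so it lies in \(\SNlistint{\SNint\sigma}\)). For the constant \((\consTsymbol)\), unfolding \(\SNint{\sigma\arrow\listTypeT\sigma\arrow\listTypeT\sigma}\) reduces the claim to Lemma~\ref{lemma:sn_cons}, and for \(\lrecTsymbol\) unfolding its type reduces the claim to Lemma~\ref{lemma:sn_nrec}. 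The application case \(ts\) is immediate: by the \IH{} we have \(t\theta \in \SNint{\sigma\arrow\tau}\) and \(s\theta \in \SNint\sigma\), whence \((ts)\theta \equiv (t\theta)(s\theta) \in \SNint\tau\) straight from the definition of the arrow interpretation. The \(\catchin \alpha t\) case follows from the \IH{} together with Lemma~\ref{lemma:sn_catch}, using that \(\catch\) is restricted to the \arrowFree{} type \(\psi\); and the \(\throwto \alpha t\) case follows from the \IH{} (which gives \(t\theta \in \SNint\psi \subseteq \SN\) by Lemma~\ref{lemma:sn_reducibility}) and Corollary~\ref{corollary:sn_throw} with an empty sequence \(\vec u\). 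Since \(\theta\) touches only ordinary variables, the continuation context plays no role in any of these cases.

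The one case needing genuine care is abstraction. Given \(\typed {\Gamma, x:\sigma} \Delta t \tau\), I must show \((\lambda x.t)\theta \equiv \lambda x.(t\theta) \in \SNint{\sigma\arrow\tau}\), i.e. that \((\lambda x.(t\theta))\,r \in \SNint\tau\) for every \(r \in \SNint\sigma\). By Lemma~\ref{lemma:sn_lambda} it suffices to produce \(r \in \SN\) and \(\subst {(t\theta)} x r \in \SNint\tau\). The former holds by Lemma~\ref{lemma:sn_reducibility}. For the latter, the key bookkeeping step is that, by the freshness of \(x\) for the \(r_i\), the iterated substitution collapses to the \emph{simultaneous} substitution for the extended context, \(\subst {(t\theta)} x r \equiv t[x_1:=r_1, \ldots, x_n:=r_n, x:=r]\); since \(r \in \SNint\sigma\), this is exactly an instance of the induction hypothesis for the subderivation \(\typed {\Gamma, x:\sigma} \Delta t \tau\), which yields membership in \(\SNint\tau\). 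This commutation of iterated and simultaneous substitution, together with the attendant capture-avoidance, is the only delicate point; every other case is a direct invocation of one closure lemma.

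Finally, I would remark that the corollary immediately yields strong normalization of all well-typed terms: taking \(r_i \equiv x_i\), which lies in \(\SNint{\rho_i}\) by Lemma~\ref{lemma:sn_reducibility} (property~2 with an empty sequence), gives \(t \equiv t\theta \in \SNint\tau \subseteq \SN\).
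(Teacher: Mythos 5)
Your proof is correct and follows essentially the same route as the paper: the paper's own proof is a one-line induction on the typing derivation, with each case discharged by the closure lemmas of the section (Lemmas~\ref{lemma:sn_reducibility}--\ref{lemma:sn_nrec} and Corollary~\ref{corollary:sn_throw}), which is exactly what you spell out. Your careful handling of the abstraction case --- strengthening the induction hypothesis over all reducible substitutions and collapsing \(\subst{(t\theta)}{x}{r}\) into the extended simultaneous substitution before invoking Lemma~\ref{lemma:sn_lambda} --- is the standard bookkeeping the paper leaves implicit, so there is no substantive difference.
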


\begin{proof}
We prove this result by induction on the derivation of \(\typed \Gamma \Delta t \tau\). 
All cases follow immediately from the results proven in this section.
\end{proof}

\begin{theorem}[Strong normalization]
\label{theorem:sn}
If \(\typed \Gamma \Delta t \rho\), then \(t \in \SN\).
\end{theorem}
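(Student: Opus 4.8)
The plan is to close the reducibility argument by the standard identity-substitution trick, deriving strong normalization directly from Corollary~\ref{corollary:sn_aux} together with Lemma~\ref{lemma:sn_reducibility}. Writing \(\Gamma = x_1:\rho_1, \ldots, x_n:\rho_n\), I would first observe that each variable lies in its own interpretation: by property~(2) of Lemma~\ref{lemma:sn_reducibility}, taking the empty argument sequence \(\vec u\) (which is trivially in \(\SNvec\)), we obtain \(x_i \in \SNint{\rho_i}\) for every \(1 \le i \le n\). These are exactly the witnesses needed to feed the corollary.

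With these in hand, I would instantiate Corollary~\ref{corollary:sn_aux} with \(r_i := x_i\). Since \(x_i \in \SNint{\rho_i}\) for all \(i\), the corollary yields \(t[x_1:=x_1, \ldots, x_n:=x_n] \in \SNint\rho\). But the simultaneous substitution of each variable for itself is the identity on \(t\), so this is simply the statement \(t \in \SNint\rho\).

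It then remains only to discharge the interpretation. Property~(1) of Lemma~\ref{lemma:sn_reducibility} gives \(\SNint\rho \subseteq \SN\), and hence \(t \in \SN\), as required.

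I do not expect a genuine obstacle at this final step, since all the substantive work has already been carried out in constructing the interpretation \(\SNint{-}\) and in establishing the soundness statement of Corollary~\ref{corollary:sn_aux} (which itself absorbs the closure lemmas for application, abstraction, \(\catch\), \throw, and \(\lrecTsymbol\) proven throughout this section). The one point to keep in mind is that the whole closing argument hinges on variables being reducible --- property~(2) of Lemma~\ref{lemma:sn_reducibility} specialized to the empty sequence --- as this is precisely what licenses replacing each \(x_i\) by itself and thereby reading off \(t \in \SNint\rho\) from a result phrased in terms of substitution.
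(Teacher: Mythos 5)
Your proposal is correct and follows exactly the paper's own argument: variables are reducible by property~(2) of Lemma~\ref{lemma:sn_reducibility} (with the empty argument sequence), Corollary~\ref{corollary:sn_aux} instantiated with the identity substitution gives \(t \in \SNint\rho\), and property~(1) of Lemma~\ref{lemma:sn_reducibility} yields \(t \in \SN\). The only difference is that you spell out the identity-substitution step explicitly, which the paper leaves implicit.
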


\begin{proof}
We have \(x_i \in \SNint{\rho_i}\) for each \(x_i:\rho_i \in \Gamma\) by Lemma~\ref{lemma:sn_reducibility}.
Therefore, \(t \in \SNint\rho\) by Corollary~\ref{corollary:sn_aux} and hence \(t \in \SN\) by Lemma~\ref{lemma:sn_reducibility}.
\end{proof}

\section{Conclusions}
\label{section:conclusions}

In this paper we have defined \lambdaCatch{} and proven that it satisfies the usual meta theoretical properties: subject reduction, progress, confluence, and strong normalization.
These proofs require minor extensions of well-known proof methods.
This section concludes with some remarks on possible extensions.

An obvious extension is to add more simple \datatype{}s, like products, sums, finitely branching trees, \etc{}
We expect our proofs to extend easily to these \datatype{}s.
However, adding more complex \datatype{}s presents some challenges. 
For example, consider the type \(\mathtt{tree}\) of unlabeled trees with infinitary branching nodes, with the constructors \(\mathtt{leaf} : \mathtt{tree}\) and \(\mathtt{node} : (\natTypeT \arrow \mathtt{tree}) \arrow \mathtt{tree}\).
A naive extension of the \arrowFree{} restriction would not forbid \(\catchin \alpha {\mathtt{node}\;(\lambda x\,.\,\throwto \alpha {\mathtt{leaf}})}\) which does not reduce to a value.
It would be interesting to modify the \arrowFree{} restriction to avoid this.

Instead of using a \GoedelsTfull{} style recursor, it would be interesting to consider a system with a pattern match and fixpoint construct.
First of all, this approach is more convenient as \GoedelsTfull{} style recursors only allows recursion on direct subterms.
Secondly, this approach would avoid the need for tricks as in Example~\ref{example:predecessor} to improve efficiency.

Another useful extension is to add second-order types \`a la \SystemFfull.
Doing this in a naive way results in either a loss of subject reduction (if we define type variables to be \arrowFree) or makes using \catch{} and \throw{} for the second-order fragment impossible (if we define type variables not to be \arrowFree).

Instead of using the statically bound control operators \catch{} and \throw{}, it would be interesting to consider their dynamically bound variants.
In a dynamically bound \catch{} and \throw{} mechanism, that is for example used in the programming language \CommonLisp{}, substitution is not capture avoiding for continuation variables.
We do not see problems to use such a mechanism instead.

The further reaching goal of this paper is to define a \lambdacalculus{} with \datatype{}s and control operators that allows program extraction from proofs constructed using classical reasoning. 
In such a calculus one can write specifications of programs, which can be proven using (a restricted form of) classical logic.
Program extraction would then allow to extract a program from such a proof where the classical reasoning steps are extracted to control operators.
Herbelin's \IQCMP{}-calculus~\cite{herbelin2010} could be interesting as it includes first-order constructs.

This goal is particularly useful for obtaining provably correct algorithms where the use of control operators would really pay off (for example if a lot of backtracking is performed). 
See~\cite{caldwell2000} for applications to classical search algorithms. 
The work of Makarov~\cite{makarov2006} may also be useful here, as it gives ways to optimize program extraction to make it feasible for practical programming.

\paragraph{Acknowledgments.}
I am grateful to Herman Geuvers and James McKinna for many fruitful discussions, and to the anonymous referees for providing several helpful suggestions.
I thank Freek Wiedijk for feedback on a draft version of this paper.
This work is financed by the Netherlands Organisation for Scientific Research (NWO).

\bibliographystyle{alphaabbr}
\bibliography{bibliography}

\begin{thebibliography}{GKM12}

\bibitem[Bar84]{barendregt1984}
H.~P. Barendregt.
\newblock {\em The lambda calculus: its syntax and semantics}, volume 103 of
  {\em Studies in Logic and the Foundations of Mathematics}.
\newblock North-Holland, 1984.



\bibitem[BHF01]{baba2001}
K.~Baba, S.~Hirokawa, and K.~Fujita.
\newblock {Parallel Reduction in Type Free $\lambda_\mu$-calculus}.
\newblock {\em ENTCS}, 42:52--66, 2001.
\doi{10.1016/S1571-0661(04)80878-8}.

\bibitem[BU02]{barthe2002}
G.~Barthe and T.~Uustalu.
\newblock {CPS Translating Inductive and Coinductive Types}.
\newblock In {\em PEPM}, pages 131--142. ACM, 2002.
\doi{10.1145/509799.503043}.

\bibitem[CF98]{colson1998}
L.~Colson and D.~Fredholm.
\newblock {System T, call-by-value and the minimum problem}.
\newblock {\em Theoretical Computer Science}, 206(1-2):301 -- 315, 1998.
\doi{10.1016/S0304-3975(98)00011-5}.

\bibitem[CGU00]{caldwell2000}
J.~L. Caldwell, I.~P. Gent, and J.~Underwood.
\newblock {Search Algorithms in Type Theory}.
\newblock {\em Theoretical Computer Science}, 232(1-2):55--90, 2000.
\doi{10.1016/S0304-3975(99)00170-X}.

\bibitem[CP11]{crolard2011}
T.~Crolard and E.~Polonowski.
\newblock A program logic for higher-order procedural variables and non-local
  jumps, 2011.
\newblock Technical report TR-LACL-2011-4.
  \url{http://arxiv.org/abs/1112.1554}.

\bibitem[Cro99]{crolard1999}
T.~Crolard.
\newblock A confluent lambda-calculus with a catch/throw mechanism.
\newblock {\em Journal of Functional Programming}, 9(6):625--647, 1999.

\bibitem[DN05]{david2005}
R.~David and K.~Nour.
\newblock {Why the usual candidates of reducibility do not work for the
  symmetric $\lambda_\mu$-calculus}.
\newblock {\em ENTCS}, 140:101--111, 2005.
\doi{10.1016/j.entcs.2005.06.020}.

\bibitem[GKM12]{geuvers2011}
H.~Geuvers, R.~Krebbers, and J.~McKinna.
\newblock {The $\lambda\mu^{\bf T}$-calculus}.
\newblock Annals of Pure and Applied Logic, 2012.
\doi{10.1016/j.apal.2012.05.005}.

\bibitem[Gri90]{griffin1990}
T.~G. Griffin.
\newblock {A Formulae-as-Types Notion of Control}.
\newblock In {\em POPL}, pages 47--58. ACM, 1990.
\doi{10.1145/96709.96714}.

\bibitem[GTL89]{girard1989}
J.~Y. Girard, P.~Taylor, and Y.~Lafont.
\newblock {\em {Proofs and Types}}.
\newblock Cambridge University Press, 1989.

\bibitem[Her10]{herbelin2010}
H.~Herbelin.
\newblock {An Intuitionistic Logic that Proves {Markov}'s Principle}.
\newblock In {\em LICS}, pages 50--56. IEEE Computer Society, 2010.
\doi{10.1109/LICS.2010.49}.

\bibitem[Mak06]{makarov2006}
Y.~Makarov.
\newblock {Practical program extraction from classical proofs}.
\newblock In {\em MFPS}, volume 155 of {\em ENTCS}, pages 521 -- 542, 2006.
\doi{10.1016/j.entcs.2005.11.071}.

\bibitem[Nak03]{nakazawa2003}
K.~Nakazawa.
\newblock {Confluency and Strong Normalizability of Call-by-Value
  $\lambda_\mu$-calculus}.
\newblock {\em Theoretical Computer Science}, 290(1):429--463, 2003.
\doi{ 10.1016/S0304-3975(01)00380-2}.


\bibitem[Par92]{parigot1992}
M.~Parigot.
\newblock {$\lambda_\mu$-calculus: An Algorithmic Interpretation of Classical
  Natural Deduction}.
\newblock In {\em LPAR}, volume 624 of {\em LNCS}, pages 190--201, 1992.
\doi{10.1007/BFb0013061}.

\bibitem[Par93]{parigot1993}
M.~Parigot.
\newblock {Classical Proofs as Programs}.
\newblock In {\em Kurt G{\"o}del Colloquium}, volume 713 of {\em LNCS}, pages
  263--276, 1993.
\doi{10.1007/BFb0022575}.

\bibitem[Par97]{parigot1997}
M.~Parigot.
\newblock {Proofs of Strong Normalisation for Second Order Classical Natural
  Deduction}.
\newblock {\em {Journal of Symbolic Logic}}, 62(4):1461--1479, 1997.
\doi{10.2307/2275652}.

\bibitem[Py98]{py1998}
W.~Py.
\newblock {\em {Confluence en $\lambda$-calcul}}.
\newblock PhD thesis, Universit\'e de Savoie, 1998.

\bibitem[RS94]{rehof1994}
J.~Rehof and M.~H. S{\o}rensen.
\newblock {The $\lambda_\Delta$-calculus}.
\newblock In {\em TACS}, volume 789 of {\em LNCS}, pages 516--542, 1994.
\doi{10.1007/3-540-57887-0\_113}.

\bibitem[Tai67]{tait1967}
W.~W. Tait.
\newblock {Intensional Interpretations of Functionals of Finite Type I}.
\newblock {\em Journal of Symbolic Logic}, 32(2):198--212, 1967.
\doi{10.2307/2271658}.

\bibitem[Tak95]{takahashi1995}
M.~Takahashi.
\newblock {Parallel Reductions in $\lambda$-Calculus}.
\newblock {\em Information and Computation}, 118(1):120--127, 1995.
\doi{10.1006/inco.1995.1057}.

\end{thebibliography}

\end{document}